\newtheorem{theorem}{\bf Theorem}
\begin{document}
\clearpage
\title{\huge Optimized Deployment of Millimeter Wave Networks for In-venue Regions with Stochastic Users' Orientation}
\author{Mehdi Naderi Soorki$^{1}$, Walid Saad$^{1}$, and Mehdi Bennis$^{2}$\vspace*{0.1cm}\\
\small {$^{1}$Wireless@VT, Bradley Department of Electrical and Computer Engineering, Virginia Tech, Blacksburg, VA, USA,Emails: \{mehdin,walids\}@vt.edu.\\
$^{2}$Centre for Wireless Communications, University of Oulu, Finland, Email: bennis@ee.oulu.fi.}
\vspace{-0.5cm}
  \thanks{This research was supported by the U.S. National Science Foundation under Grants CNS-1526844 and IIS-1633363. A preliminary version of this work appears in~\cite{ours2}.}%
}
\maketitle
\thispagestyle{empty}
\vspace{-1cm}
\begin{abstract}
Millimeter wave (mmW) communication is a promising solution for providing high-capacity wireless network access. However, the benefits of mmW are limited by the fact that the channel between a mmW access point and the user equipment can stochastically change due to severe blockage of mmW links by obstacles such as the human body. Thus, one main challenge of mmW network coverage is to enable directional line-of-sight links between access points and mobile devices. In this paper, a novel framework is proposed for optimizing mmW network coverage within hotspots and in-venue regions, while being cognizant of the body blockage of the network's users. In the studied model, the locations of potential access points and users are assumed as predefined parameters while the orientation of the users is assumed to be stochastic. Hence, a joint stochastic access point placement and beam steering problem subjected to stochastic users' body blockage is formulated, under desired network coverage constraints. Then, a greedy algorithm is introduced to find an approximation solution for the joint deployment and assignment problem using a new ``size constrained weighted set cover'' approach. A closed-form expression for the ratio between the optimal solution and approximate one (resulting from the greedy algorithm) is analytically derived. The proposed algorithm is simulated for three in-venue regions: the meeting room in the Alumni Assembly Hall of Virginia Tech, an airport gate, and one side of a stadium football. Simulation results show that, in order to guarantee network coverage for different in-venue regions, the greedy algorithm uses at most three more access points (APs) compared to the optimal solution. The results also show that, due to the use of the additional APs, the greedy algorithm will yield a network coverage up to $11.7\%$ better than the optimal, AP-minimizing solution.
\end{abstract}
\vspace{0.1cm}
{\small \emph{Index Terms}--- MmW Networks; Network planning; Stochastic optimization; Set-covering problem.}

\section{Introduction}\label{sec:Intro}
Millimeter wave (mmW) communications over the 30-300 GHz band is a promising approach for overcoming the problem of spectrum scarcity in wireless cellular networks~\cite{Walid6G,shokri2015millimeter,Omid2,baldemair2015ultra}. Due to the large amount of bandwidth available at mmW bands, mmW communication promises to deliver high wireless data rates which makes it an attractive solution for providing wireless connectivity to hotspot regions such as large theaters, arenas, stadiums, shopping malls, and transportation hubs ~\cite{baldemair2015ultra}. In such popular in-venue scenarios, there are tens of thousands of active users packed into a relatively small area~\cite{Kulkarni2017}. Nonetheless, many technical challenges must be overcome to reap the benefits of mmW network deployments and ensure reliable mmW communication~\cite{Bennis}. One prominent challenge is the sensitivity of mmW signals to blockage in dense regions~\cite{singh2015tractable} caused by people, objects in the local environment, and changes in the orientation of the mobile device (MD) carried by the users~\cite{Seong2016GHz}.

In order to overcome propagation challenges such as blockage by humans and buildings, mmW systems typically use beamforming at both access points and mobile devices~\cite{shokri2015millimeter}. Indeed, the use of high-gain directional antennas and dense access point (AP) deployment is necessary for effective mmW communications~\cite{baldemair2015ultra,singh2015tractable} and~\cite{7136141}. A dense AP deployment with high-gain directional antennas can compensate for the significant path loss over mmW frequencies and also allows the establishment of line-of-sight (LoS) links with sufficiently large signal-to-noise ratio. Due to the random blockage of the LoS of the mmW links, optimizing the deployment and beam steering strategies for the APs becomes more challenging than in conventional networks, particularly when deployment is done in three-dimensional space within a venue~\cite{Kovalchukov2018}.

\subsection{Prior works}
Recent works on mmW communications such as~\cite{szyszkowiczautomated,shokri2015user, bai2015coverage,ours,ours2,7136141,8292400,Omid1,Channel_model_value}, and~\cite{gruber2016scalability} have investigated the problems of access point deployment, beam steering, and coverage optimization. In~\cite{7136141}, the authors investigate the effect of the number of antennas on the capacity and coverage probability in the mmW-based small cell, then they propose an antenna clustering scheme to utilize the antennas more efficiently. In addition, they define capacity-maximization and coverage-maximization criteria in the mmW-based small cell. Each design criterion is also formulated as a joint optimization problem in~\cite{7136141}. In~\cite{szyszkowiczautomated}, the authors develop an algorithm that uses computational geometry to place below-rooftop wall-mounted access points using a LoS propagation model for mmW carriers. The goal was to find a set of candidate AP locations whose LoS region, as viewed from each AP in a given set of disjoint geotropical blocks, has a locally maximum area. The authors in~\cite{shokri2015user} proposed a distributed auction-based solution, in which the MDs and APs act asynchronously to achieve optimal MD association and beamforming. In~\cite{shokri2015user}, the problem of jointly optimizing resource allocation and user association in mmW networks is investigated. The goal of this work was to maximize the data rates for the network users while considering load balancing across APs. The authors in~\cite{8292400} study the user-base station association problem in network with the existence of both  mmW and microwave base stations. Considering that each base station has a limited number of resource blocks, an optimization problem is formulated in order to maximize the number of associated users and to ensure an efficient resource utilization by minimizing simultaneously the number of used resource blocks. In~\cite{Omid1,Channel_model_value} the problem of deploying dual-mode base stations that integrate both mmW and microwave frequencies  is investigated.  The authors in~\cite{Omid1} propose a novel framework based on the matching theory to exploit the users' context in resource allocation over the mmW and microwave frequency bands. In~\cite{Channel_model_value}, the problem of cell association is formulated as a one-to-many matching problem with minimum quota constraints for the base stations that provides an efficient way to balance the load over the mmW and microwave frequency bands. To solve the problem, a distributed algorithm is proposed that is guaranteed to yield a Pareto optimal and two-sided stable solution. The authors in~\cite{gruber2016scalability} consider a set of restricted locations for APs and static users. They obtain the optimal number of access points to maximize average user throughput by means of simulations. Then, they show how this optimal number of access points is affected by user distribution and beamwidth of the antennas. Despite treating key challenges of mmW system deployment, the works in~\cite{szyszkowiczautomated,shokri2015user, bai2015coverage,ours,ours2,7136141,8292400,Omid1,Channel_model_value}, and~\cite{gruber2016scalability} consider a simple binary probability model for LoS mmW link, and they do not capture the stochastic blockage of mmW links due to the users' body within real in-venue regions.

In~\cite{bai2015coverage,8292566,8493070,Gapeyenko2016,Geordie2017}, and~\cite{Kovalchukov2018}, the stochastic geometry framework is proposed to evaluate the coverage and rate performance of mmW cellular networks. The authors in~\cite{bai2015coverage} study the effects of mmW blockage by applying a distance-dependent LoS probability function, and modeling the APs as independent inhomogeneous LoS and non-LoS point processes. Then, the mmW coverage and rate performance are analyzed as a function of the antenna geometry and AP density. A K-tier heterogeneous downlink mmW cellular network with user-centric small cell deployments is studied in~\cite{8292566}. In particular, the authors in~\cite{8292566} consider a heterogeneous network model with user equipments being deployed according to a Poisson cluster process. In addition, the MDs are clustered around the base stations and the distances between MDs and the base station are assumed to be Gaussian distributed. Then, using tools from stochastic geometry, they derive a general expression of the signal-to-interference-plus-noise ratio coverage probability. In~\cite{8493070}, the authors consider an open park-like scenario and obtain closed-form expressions for the expected frequency and duration of blockage events using stochastic geometry. Their results indicate that the minimum density of base station, that is required to satisfy the quality of service requirements specially for ultra reliable low latency applications, is largely driven by blockage events rather than capacity requirements. In~\cite{Gapeyenko2016}, the authors propose a tractable model for characterizing the probability of human-body blockage in urban environment. They modeled humans as cylinders with arbitrarily distributed heights and radii, whose centers follow a Poisson point process in two dimensions. By using stochastic geometry, the authors in~\cite{8493070} find the blockage probability as a function of receiver dimension and the transmitter-receiver separation. Then, based on their analysis, the optimal height of the mmW transmitter in crowded outdoor environments is derived and shown to be proportional to the transmitter-receiver separation. In~\cite{Geordie2017}, the authors study the feasibility of mmW frequencies in the wireless wearable devices. They consider a closed indoor scenario where the people are randomly distributed and they derived closed-form expressions for the interference. In~\cite{Kovalchukov2018}, the authors derive the mean interference for emerging 3D mmW communication scenarios where both transmitting and receiving ends have random heights and positions. Although the works in~\cite{Kovalchukov2018}, ~\cite{bai2015coverage}, and~\cite{8292566,8493070,Gapeyenko2016,Geordie2017}, use stochastic geometry to analyze performance, they assume that the users' locations follow a well-defined point process. However, for a given real in-venue region such as a stadium or hall, well-known point processes are not suitable to model the users' locations. This is due to the fact that the locations because in-venue user locations (e.g., in seating charts) are not random. Moreover, in real scenarios, the impact of the stochastic blockage of mmW links due to user blockage can be modeled more accurately given the position of seats for in-venue regions.

More recent works on mmW communications such as~\cite{Kulkarni2017,Dub1}, and~\cite{Rappaport_blockage} have investigated the effects of human body blockage on the availability of LoS mmW links between the access points and mobile devices in real scenarios. The work in~\cite{Dub1} studied dense deployments of millimetre-wave access points with fixed directional antennas mounted on the ceiling. In the setup of~\cite{Dub1}, the main factor limiting signal propagation are blockages by human bodies. They evaluate a number of scenarios that take into account beamwidth of the main-lobe, access point density, and positioning of the mobile device with respect to the user’s body. Then, the authors in~\cite{Dub1} find a trade-off in beamwidth design, as the optimal beamwidth maximizes either coverage or area spectral efficiency, but not both. The work in~\cite{Rappaport_blockage} focuses on blockage events caused by typical pedestrian traffic in a heavily populated open square scenario in Brooklyn, New York. Transition probability rates are determined from the measurements for a two-state Markov and a four-state piecewise linear models. In practice, the availability of LoS mmW links between the access points and mobile devices can be highly dynamic because mmW signals are sensitive to human body blockage and user orientation. In~\cite{Kulkarni2017}, the authors used 3D ray tracing to evaluate the performance of mmW cellular networks in a realistic model of the MetLife stadium. They modeled human blockage features using the dielectric properties at 28 GHz. Then, they showed that meeting a minimum of 100 Mbps rate in large populated venues such as stadiums is very challenging, even in dense mmW networks with high bandwidth. In~\cite{Cisco2011}, a solution called Cisco's Connected Stadium Wi-Fi is designed to provide full coverage throughout venues. However, the bit rate of this solution is limited because the WiFi collision rate will be high in crowded in-venue scenarios.

 Despite treating key challenges of mmW system deployment in a real scenarios, the works in~\cite{Dub1,Rappaport_blockage,Kulkarni2017} and \cite{Cisco2011} completely ignore the impact of the stochastic blockage of mmW links that can result from the randomly changing orientation of the users and their devices in a real in-venue regions.

Recent works such as~\cite{ours2} and~\cite{ours} have considered the stochastic blockage resulting from the orientation of users on the coverage of mmW networks. The work in~\cite{ours} used chance-constrained stochastic programming~\cite{stochasticProgramming} to find the optimal position for APs when the position of the users is given. However, this work does not study the problem of beam steering and it relies on a complex stochastic optimization formulation that cannot be used for a large number of APs and MDs. The work in~\cite{ours2} proposed a greedy algorithm to solve stochastic optimization problems like the one in~\cite{ours} with beam steering. However, the works in~\cite{ours2} and~\cite{ours} do not consider the blockage of nearby users while the human body blockage of nearby users affects on the coverage of the target user specially for crowded in-venue regions.
\subsection{Contributions}
The main contribution of this paper is a novel analytical framework that enables the joint optimization of mmW access point deployment and beam steering while being cognizant of MDs' orientations and blockage of near by users within in-venue mmW networks. In particular, we consider the stochastic blockage of mmW links that is caused by a user's body due to the random orientation of the user devices. The proposed approach explicitly accounts for the three-dimensional nature of the antenna beams of the MDs and APs. We formulate a joint stochastic AP placement and beam steering problem subject to network coverage constraints. In the proposed formulation, given that the connectivity of the mmW links randomly changes due to the stochastic orientation of the users, we minimize the number of required access points and optimize the beam direction of APs to guarantee a required network coverage under the random changes caused by the users' orientation. Since, the complexity of the joint stochastic AP placement and beam steering problem is high specially in three-dimensional space, we propose a new greedy algorithm based on the ``size constrained weighted set cover'' framework~\cite{thomas2001introduction} to find approximate solutions to the joint stochastic AP deployment and MD assignment problem. The closed-form expression between the optimal and approximate solutions is analytically derived. Simulation results show that in order to guarantee coverage constraint, the greedy algorithm uses at most two additional APs in the Alumni Assembly Hall of Virginia Tech and airport gate, and four additional APs in one side of football stadium compared to the optimal solution. Moreover, although the greedy algorithm uses the additional APs compared to the optimal solution, the greedy algorithm will yield a network coverage that is about $3\%$, $11.7\%$, $8\%$ better than the optimal, AP-minimizing solution, for the meeting room in the Alumni Assembly Hall of Virginia Tech, airport gate, and one side of the football stadium, respectively. In summary, the main contributions of this work are:

\begin{itemize}
\item We provide an exact model of the stochastic blockage of mmW links in real-world three-dimensional scenarios in which a large number of users are closely seated within in-venue regions. Our model captures both the blockage of nearby  users as well as the stochastic blockage due to the random orientation of the users.

\item We formulate a new joint stochastic AP placement and beam steering problem for realistic, three-dimensional in-venue regions. Then, we propose a new greedy algorithm and mathematically derive the approximation gap between the optimal and approximate solutions in closed-form.

\item We evaluate the efficiency of our proposed model and algorithm in several realistic settings that include a hall at Virginia Tech, a stadium, and an airport gate. Based on the simulation results, a network operator can practically use our proposed algorithm to place mmW APs. Our proposed algorithm can satisfy the connectivity requirement for in-venue regions including a set of seats located to one another.
\end{itemize}

The rest of the paper is organized as follows. Section~\ref{Sec:Sys-Model} presents the system and the joint stochastic AP placement and beam steering problem while considering the blockage of nearby users. Then, we present our proposed greedy AP placement algorithm for the problem in Section~\ref{Sec:Greedy}. In Section~\ref{Sec:Simulation}, we numerically evaluate the proposed greedy algorithm for different in-venue regions. Finally, conclusions are drawn in Section~\ref{Sec:Conclusion}.
\section{System Model And Problem Formulation}\label{Sec:Sys-Model}
\subsection{System Model}
Consider a set $\mathcal{L}$ of $L$ candidate locations for placing mmW APs in a three-dimensional space. We consider a finite value for $L$ because the maximum possible number of mmW APs should be finite in practice.

Each candidate location $l$ is given by $(x_l,y_l,z_l)$ in Cartesian coordinates. Let $b_l$ be a binary variable which is equal to one if an AP is placed at candidate location $l$, and zero otherwise. We consider fixed directional antennas, where the antennas of mmW AP are pointing down from the ceiling in an in-venue region and creating a spotlight of coverage. In this model, each mmW AP has one wide beam of width $W$ in three-dimensional space~\cite{shokri2015millimeter}. After directing the wide and fixed beam of the mmW AP to a desired direction, based on the resource allocation scheme at the medium access control (MAC) layer, a total of $T$ users can be covered by each beam of mmW AP~\cite{shokri2015millimeter}. The actual antenna pattern is approximated by a flat-top sectored antenna model that is characterized by its pattern function which measures the power gain in polar coordinates around the antenna $g(\theta, \phi)$ over the spherical elevation and azimuthal angle coordinates, $\theta$ and $\phi$~\cite{singh2011interference}. We assume that the spherical elevation and azimuthal angles, $\theta_l$ and $\phi_l$, of the antenna of each AP $l$ are chosen from discrete values in $\Theta=\{\frac{n\pi}{4}|n=0,1,2\}$ and $\Phi=\{\frac{n\pi}{4}|n=0,1,...,7\}$, respectively.

Here, our focus is on in-venue scenarios in which a large number of active users are located within a defined seating chart as is the case in a sports stadium, a lecture hall, a concert venue, a theater, or even an airport scenario where users are located at a gate~\cite{Kulkarni2017}. Clearly, for such scenarios a grid-like model is quite appropriate since seats are pre-defined and located adjacently to one another in a grid-like fashion. Indeed, based on the locations of the seats in such scenarios, we can assume a grid-like structure, which is chosen to match the locations. This assumption becomes even more realistic when the the number of users is high enough to fill all of the seats~\cite{Kulkarni2017}. Thus, based on the structures of the locations of seats, we model the venue as an area that follows a grid-like structure where $\mathcal{M}$ is a set of $M$ grid positions (GPs). In addition, the locations of users within a seating chart can be modeled by a static probabilistic model because the locations of seats are static. The probability of presence of a user at position $m$ is given by $q_m$ and is known a priori to the network operator. This model adequately captures important mmW venues such as hotspots and densely populated areas that include transportation hubs, shopping malls, sport stadiums, and theater halls. We define the transmission gain, $G_{lm}^{\textrm{TX}}$, as the directional gain that AP $l$ adds to the link between AP $l$ and the MD present at GP $m$. Here, $\varphi_{lm}^{\textrm{TX}}$ is the azimuthal angle between the positive x-axis and the direction in which AP $l$ views grid position $m$ in the horizon plane and $\psi_{lm}^{\textrm{TX}}$ is the spherical elevation between the positive z-axis and the direction in which AP $l$ sees GP $m$. Fig.~\ref{Smodel_1} is an illustrative example that shows the  azimuthal angle and spherical elevation as viewed by an AP.
\begin{figure}[!t]
	\begin{center}
		\includegraphics[width=.5\linewidth]{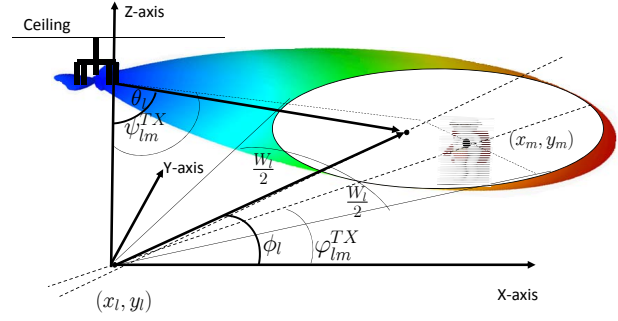}		\vspace{-0.2cm}
		\caption{ \small The azimuthal angle and spherical elevation from the AP view in three-dimensional space.}\vspace{0cm}
		\label{Smodel_1}
	\end{center}
\end{figure}

At each GP, a given user's device can form one narrow beam with width $w$ in the three-dimensional space. Due to the random changes in the orientation of the users within the horizon plane, the azimuthal angle of a given user at GP $m$ is assumed to be a random variable, $\tilde{\phi}_{m}$, with a given probability distribution function $\Pr(\tilde{\phi}_{m}\in \mathcal{B})$, where $\mathcal{B} \subseteq [-\pi,\pi]$. The spherical elevation, $\rho_m$, for any user $m$ at GP $m$, is assumed to be constant. We define the receiver gain $\tilde{G}_{ml}^{\textrm{RX}}$ as the directional gain that an MD located at GP $m$ adds to the link between AP $l$ and GP $m$. $\tilde{G}_{ml}^{\textrm{RX}}$ is a random variable due to the random changes in the orientation of the user.

{In our model, we have accounted for both self-body blockage from a given user as well as blockage from nearby users. The self-body blockage resulting from the body of the user on its own MD is stochastic because the user's orientation randomly changes. On the other hand, the blockage due to the walls and nearby seats in an in-venue region is static. We define $\mathcal{B}_{ml}$ as a set of azimuthal angles within the horizon plane, and $\mathcal{A}_{ml}$ as set of elevation angles within elevation plane that LoS links can be available between a given GP $m$ and AP $l$. Due to the static blockages, these sets, $\mathcal{B}_{ml}$ and $\mathcal{A}_{ml}$, represent in-venue region-dependent variables which are affected by the location of user $m$ and AP $l$, as well as by the locations of other nearby users in densely populated in-venue regions. If $\mathcal{B}_{ml} \neq \emptyset$, $\mathcal{A}_{ml} \neq \emptyset$, and $\phi_{m} \in \mathcal{B}_{ml}$ and $\rho_{m} \in \mathcal{A}_{ml}$, a mmW LoS link is not statically blocked between $m$ and AP $l$, however this LoS link can be stochastically blocked due to the random changes in the user's orientation}.

Consequently, based on the large-scale channel effects over the mmW links following the popular model of~\cite{Channel_model} and the availability of mmW LoS link in our model, the channel gain in dB for mmW link between GP $m$ and AP $l$ is given by:
\begin{equation}
\tilde{h}_{ml}=
  \begin{cases}
      -\kappa-\alpha_L10\log_{10}d_{ml}-\chi_L,& \quad \text{if } \tilde{\phi}_{m} \in \mathcal{B}_{ml},{\rho}_{m} \in \mathcal{A}_{ml},\\
    -\kappa-\alpha_N10\log_{10}d_{ml}-\chi_N, & \quad \text{else},\\
  \end{cases}
\label{Path_loss}
\end{equation}
where $\kappa$ is the path loss (in dB) for 1 meter of distance, $\alpha_L$ and $\alpha_N$ respectively represent the slopes of the best linear fit to the propagation measurement in mmW frequency band for LoS and non-LoS mmW links. In addition, $\chi_L$ and $\chi_N$ model the deviation in fitting (in dB) for LoS and non-LoS mmW links, respectively.  $\chi_L$ and $\chi_N$ are Gaussian random variables with zero mean and variance $\varepsilon_L^2$ and $\varepsilon_N^2$. $d_{ml}$ is the distance between GP $m$ and AP $l$. $\tilde{h}_{ml}$ is a random variable due to the random changes in the orientation of the user and also the blockage of nearby users in in-venue region.

Fig.~\ref{Smodel_2} is an illustrative example that shows the  azimuthal angle and  a LoS-angle set for one GP and two APs. As we can see in Fig.~\ref{Smodel_2}, the LoS-angle sets $\mathcal{B}_{m1}\neq \emptyset$ and $\mathcal{A}_{m1}\neq \emptyset$. Thus, AP $1$ and GP $m$ can have a LoS link because there is no blockage in azimuth and elevation angles. Since $\mathcal{B}_{m2} \neq\emptyset$  and $\mathcal{A}_{m2}=\emptyset$, there is a non-LoS mmW link between AP $2$ and GP $m$ because there is a blockage in the elevation angle.
\begin{figure}[!t]
	\begin{center}
		\includegraphics[width=.5\linewidth]{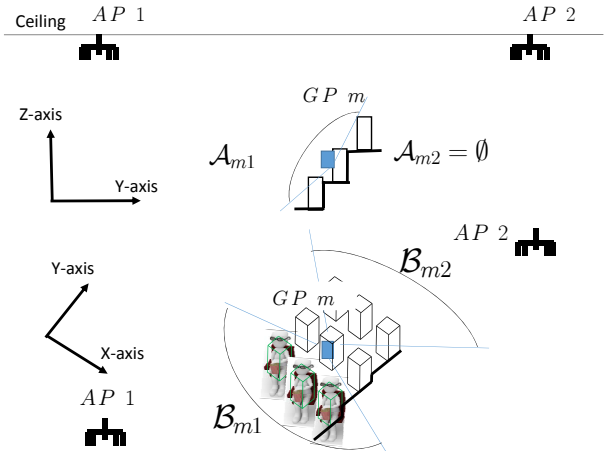}		\vspace{-0.2cm}
		\caption{ \small The LoS-angle sets for the  azimuthal and elevation angles.}\vspace{0cm}
		\label{Smodel_2}
	\end{center}
\end{figure}

Since the orientations of the users can change randomly, the total directional gain between serving APs and users, $G_{lm}^{\textrm{TX}} \times \tilde{G}_{ml}^{\textrm{RX}}$ is a random variable. The AP placement problem is trivial if an AP located at one candidate location can guarantee the availability of LoS links for the users. However, due to the random changes in user orientations, the users may not have an available LoS link even if the beam of the AP can be steered. One solution to overcome this challenge can be to assign more than one AP to each user in order to guarantee coverage for all users, under stochastic orientations, as we formulate next.

\subsection{Joint stochastic AP placement and beam steering problem}\label{Sec:Optimum}
Following the mmW AP placement framework, when a given user's body blocks the LoS mmW link of one mmW AP because of a change of orientation, the beams of more than one mmW APs must be steered toward that user. However, the interfering signals from other mmW APs can be cancelled in the spatial domain by the user's body blockage, particularly, for in-venue regions. Hence, the signal-to-noise ratio (SNR) at GP $m$ resulting from the transmission of an AP at candidate location $l$ is $\tilde{\gamma}_{lm}=\frac{p \tilde{G}_{ml}^{\textrm{RX}} G_{lm}^{\textrm{TX}} \tilde{h}_{ml}}{\sigma^2}$, where $p$ is the transmission power of AP. Given the large amount of bandwidth available at millimeter wave frequencies and the relatively short propagation distance, we assume that multi-user interference is managed via a suitable multiple-access scheme and as a result is negligible~\cite{singh2011interference}. Since the user orientation is stochastic, the SNR over the mmW links for each GP will not be deterministic. Let $\mathcal{L}_m$ be the set of APs where ${\gamma}_{lm}^* \leq {\gamma}_{lm}$. Here, $\gamma_{lm}^*$ represents the minimum SNR requirement needed to have an active communication link between a user at GP $m$ and an AP at candidate location $l$.

According to the orientation, $\tilde{\phi}_{m}$, of MD $m$, $\tilde {\gamma}_{lm}$ will be a random variable for each GP $m$ and AP in $\mathcal{L}_m$. To ensure reliable mmW communication under the random changes caused by the users' orientation, millimeter APs should have enough connections to the user to transmit a given amount of traffic with high success probability~\cite{Bennis}. Thus, the \textit{user connectivity} constraint $\beta_m$ for user $m$ is defined using a pre-determined threshold for the probability that ${\gamma}_{m}^*~ < \tilde{\gamma}_{m}$. The user connectivity constraint $\beta_m$ for user $m$ is given by:
\begin{align}
& \Pr\left\{\sum_{l\in \mathcal{L}_m} \tilde{y}_{lm} a_{lm}\geq 1\right\}\geq \beta_m,
\label{Stochastic_C}
\end{align}
where $a_{lm}$ is a binary variable that equals to one if GP $m$ is assigned to AP $l$, and $\tilde{y}_{lm}$ is a binary random variable. The probability that $\tilde{y}_{lm}$ equals to one depends on the orientation of user $m$ and location of AP $l$.

The chance-constrained method is a formulation of an optimization problem that ensures that the probability of meeting a certain constraint is above a certain level~\cite{stochasticProgramming}. In other words, the chance-constrained method restricts the feasible set of solution for the stochastic optimization problem so that the confidence level of the solution becomes high enough~\cite{stochasticProgramming}. Since the orientation of the users stochastically changes, the mmW link between a user and its access point may be randomly blocked by users' bodies. Thus, as one of the major approaches to guarantee user connectivity constraint in (\ref{Stochastic_C}), we use chance-constrained method under the various uncertainties in the orientation of the users. Let $\omega \in \Omega$ be the index of any given scenario for ${y}_{lm}^{(\omega)}$. The total number of scenarios for AP assignment per user is $2^L$. According to the orientation $\phi_m$ of MD $m$, each scenario $\omega$ has a probability of $q_m^{(\omega)}$ for each MD $m$. For a given user, the probability of each scenario is related to that user's orientation as well as the locations of the other users the in-venue region. This probability is given by $q_m^{(\omega)}=\Pr\left\{\phi_m \in \cap_{l:{y}_{lm}^{(\omega)}=1} \mathcal{B}_{ml},
\rho_m \in \cap_{l:{y}_{lm}^{(\omega)}=1} \mathcal{A}_{ml} \right\}$, where $\mathcal{B}_{ml}=[B_{ml,1},B_{ml,2}]$ and $\mathcal{A}_{ml}=[A_{ml,1},A_{ml,2}]$. Here, $B_{ml,1}$ and $A_{ml,1}$ are the lower bounds and $B_{ml,2}$ and $A_{ml,2}$ are the upper bounds for $\mathcal{B}_{ml}$ and $\mathcal{A}_{ml}$, respectively. Then, following chance-constrained stochastic programming, we guarantee that the coverage constraint of each user is satisfied for a predefined number of scenarios. Thus, the constraint in (\ref{Stochastic_C}) can be equivalently represented by an auxiliary variable $u_m^{(\omega)}$, where $u_m^{(\omega)}$ is a new binary decision variable. $u_m^{(\omega)}$ equals one if under scenario $\omega$ the coverage demand of MD $m$ is not satisfied, otherwise $u_m^{(\omega)}$ equals zero. If $u_m^{(\omega)}=\mathds{1}_\{  \gamma_{m}^{(\omega)} <  {\gamma}_{m}^* \}$, then constraint (\ref{Stochastic_C}) can be given by:
\begin{align}
& \sum_{l\in \mathcal{L}} {y}_{lm}^{(\omega)} a_{lm}\geq (1-u_{m}^{(\omega)}) \; \forall m \in \mathcal{M},\forall \omega \in \Omega, \label{Lin_Stochastic_C1}\\
& \sum_{\omega \in \Omega} q_m^{(\omega)} u_m^{(\omega)}\leq 1-\beta_m, \forall m \in \mathcal{M}.
\label{Lin_Stochastic_C2}
\end{align}

We define the \textit{network coverage} as the summation of the presence probability of the users whose connectivity constraints are satisfied:
\begin{equation}
\sum_{m\in \mathcal{M}} q_m \mathds{1}_{ \big\{ \sum_{\omega \in \Omega} q_m^{(\omega)} (1-u_m^{(\omega)}) \geq \beta_m \big\} } \geq \alpha.
\label{network_coverage}
\end{equation}

The network coverage constraint in (\ref{network_coverage}) guarantees that the sum of the probabilities $q_m$ associated with those locations where the probability of meeting the SNR threshold is at least $\beta_m$ is at least $\alpha$. Here, $\alpha$ is a constant value between zero and one. (\ref{network_coverage}) is nonlinear and, thus, it can be equivalently represented by an auxiliary binary variable $z_l$, where $z_m=1$ if the APs that are assigned to GP $m$, can guarantee the user connectivity requirement of the user at GP $m$, and $z_m=0$ otherwise. Let $z_m=\mathds{1}_{ \big\{ \sum_{\omega \in \Omega} q_m^{(\omega)} (1-u_m^{(\omega)}) \geq \beta_m \big\} }$. Consequently, the joint AP placement and beam steering problem can be formulated as the following stochastic optimization problem:
\begin{align}
& \underset{\left\{\substack{b_l,\phi_l,\theta_l,a_{lm},y_{lm}^{(\omega)},u_{m}^{(\omega)},z_m\\ l \in \mathcal{L},m \in \mathcal{M},
\omega \in \Omega}\right\}}{\min} \; \sum_{l\in \mathcal{L}}  \; b_l, \label{optprob_Stoc_C}\\
& \hspace{0.2in} \text{s.t.}    \nonumber \\
& \hspace{0.2in}\sum_{l\in \mathcal{L}} {y}_{lm}^{(\omega)} a_{lm}\geq (1-u_{m}^{(\omega)}) \; \forall m \in \mathcal{M},\forall \omega \in \Omega, \label{Lin_Stochastic_C1}\\
& \hspace{0.2in} \sum_{\omega \in \Omega} q_m^{(\omega)} u_m^{(\omega)}\leq 1-\beta_m, \forall m \in \mathcal{M},\label{Lin_Stochastic_C1}\\
& \hspace{0.2in} \sum_{\omega \in \Omega} q_m^{(\omega)} (1-u_m^{(\omega)}) \geq z_m \beta_m, \forall m \in \mathcal{M}, \label{optprobc_Stoc_C2} \\
& \hspace{0.2in} 1-\sum_{\omega \in \Omega} q_m^{(\omega)} (1-u_m^{(\omega)}) \geq (1-z_m)(1-\beta_m), \forall m \in \mathcal{M}, \label{optprobc_Stoc_C2} \\
& \hspace{0.2in} \sum_{m\in \mathcal{M}} q_m z_m \geq \alpha,\\
& \hspace{0.2in} \sum_{l\in \mathcal{L}} b_l \leq L \\
& \hspace{0.2in} b_l \leq \sum_{m\in \mathcal{M}}a_{lm}\leq Tb_l, \forall l \in \mathcal{L}, \label{optprobc_Stoc_C3}\\
&\hspace{0.2in} -\frac{W}{2}\leq a_{lm}(\phi_l-\phi_{lm}^{\textrm{TX}})\leq \frac{W}{2},
\forall m \in \mathcal{M},\forall l \in \mathcal{L}_m,\\
&\hspace{0.2in} -\frac{W}{2}\leq a_{lm}(\theta_{l}-\psi_{lm}^{\textrm{TX}}) \leq \frac{W}{2},
\forall m \in \mathcal{M},\forall l \in \mathcal{L}_m,\\
&\hspace{0.2in}  b_{l},a_{lm} \in \{0,1\},\forall l \in \mathcal{L},\forall m \in \mathcal{M},  \label{optprobc_Stoc_C4}\\
& \hspace{0.2in} \theta_l \in \Theta,\phi_l \in \Phi,\forall l \in \mathcal{L},\label{optprobc_Stoc_C6}\\
& \hspace{0.2in} y_{lm}^{(\omega)},z_{m},u_{m}^{(\omega)} \in \{0,1\}, \forall l \in \mathcal{L},\forall m \in \mathcal{M},\forall \omega \in \Omega. \label{optprobc_Stoc_C5}
\end{align}

The complexity of (\ref{optprob_Stoc_C}) is $O(2^L|\Theta||\Phi|)$ if one uses an exhaustive search algorithm. Hence, it is infeasible to use a brute force algorithm for solving the dense mmW AP placement and beam steering problem specially for scenarios with large size. Here, we note that a standard optimizer such as CPLEX~\cite{Cplex} can be used to solve this problem with a faster computational speed compared to the exhaustive search. However, the computation is still time-consuming when the number of candidate locations increases. To overcome this complexity challenge, next, we propose a novel and efficient greedy algorithm using notions from ``set covering''~\cite{thomas2001introduction}, and~\cite{Set_cover}.

In our model, we consider a sectored antenna model for each AP. The transmission gains are assumed to be equal to a constant value $\frac{2}{1-\cos (\frac{W}{2})}$ for angles in the main lobe, and $g$ for angles in the side lobe~\cite{bai2015coverage}. Thus, the transmission gain of AP $l$ to GP $m$ is given by $G_{lm}^{\textrm{Tx}}=\frac{2}{1-\cos (\frac{W}{2})}$ if $-\frac{W}{2}\leq \phi_{l}-\varphi_{lm}^{\textrm{Tx}}\leq \frac{W}{2}$  and $-\frac{W}{2}\leq \theta_{l}-\psi_{lm}^{\textrm{Tx}}\leq\frac{W}{2}$, otherwise $G_{lm}^{\textrm{Tx}}=g$. Moreover, for each MD, the receive gains are assumed to be equal to a constant value $\frac{2}{1-\cos (\frac{w}{2})}$ for angles in the main lobe and $g$ for angles in the side lobe\cite{bai2015coverage}. Thus, the receive antenna gain of GP $m$ from AP $l$ is given by $\tilde{G}_{ml}^{\textrm{Rx}}=\frac{2}{1-\cos (\frac{w}{2})}$, if $-\frac{w}{2}\leq \tilde{\phi}_{m}-\varphi_{ml}^{\textrm{Rx}}\leq \frac{w}{2}$  and $-\frac{w}{2}\leq\rho_{m}-\psi_{ml}^{\textrm{Rx}}\leq \frac{w}{2}$, otherwise $\tilde{G}_{ml}^{\textrm{Rx}}=g$. Here, $\varphi_{ml}^{\textrm{RX}}$ is the azimuthal angle between the positive x-axis and the direction in which GP $m$ sees AP $l$ and $\psi_{ml}^{\textrm{RX}}$ is the spherical elevation between the positive z-axis and the direction in which GP $m$ sees AP $l$. The list of main notations used throughout this paper is presented in Table~\ref{tab:Symbols}.

\begin{table}[ht]
	\caption{{List of main notations used throughout the paper.}}
	\begin{center}
		\begin{tabular}{l | l } \toprule
{ {\textbf{Symbol}}}  & { {\textbf{Definition}}}   \\  \hline
{$\mathcal{L}$}   &  {Finite set of $L$ candidate locations for APs.}\\
{$(x_l,y_l,z_l)$} &  {Cartesian coordinate representing the location of mmW AP $l$.}\\
{$W$ } & {Beamwidth of AP.}\\
{$T$} &  {Maximum number of users per beam of each AP.}\\
{$\theta_l$} &  {Spherical elevation angle of the antenna of AP $l$.} \\
{$\phi_l$} &  {Azimuthal angle of the antenna of AP $l$.}\\
{$\mathcal{M}$} & { Finite set of $M$ grid positions.}\\
{$q_m$ } &  {The probability of presence of a user at position $m$.} \\
{$G_{lm}^{\textrm{TX}}$} &  {Antenna gain of AP $l$ over the link between AP $l$ and GP $m$.}\\
{$\varphi_{lm}^{\textrm{TX}}$} &  {Azimuthal angle between the positive x-axis and the direction.}\\
 $$ & {in which AP $l$ views grid position $m$.}\\
{$\psi_{lm}^{\textrm{TX}}$ }& {Spherical elevation angle between the positive z-axis and the direction}\\
$$ & {in which AP $l$ sees GP $m$.}\\
{$w$} & { Narrow beam width of each user's device.}\\
{$\tilde{\phi}_{m}\in \mathcal{B}$} & {Azimuthal angle of a given user at GP $m$.}\\
{$\rho_m$} &   {Spherical elevation angle for a user at GP $m$.}\\
{$\mathcal{B}_{ml}$} & {A set of azimuthal angles that LoS links are available between a given GP $m$ and AP $l$.}\\
{$\mathcal{A}_{ml}$}& {A set of elevation angles that LoS links are available between a given GP $m$ and AP $l$.}\\
{$\tilde{h}_{ml}$} & {Channel gain for mmW link between GP $m$ and AP $l$.}\\
{$\tilde{\gamma}_{lm}$} &  {SNR at GP $m$ resulting from the transmission of an AP at candidate location $l$.}\\
{$\gamma_{lm}^*$ } & {Minimum requirement for SNR between user at GP $m$ and AP $l$.}\\
{$B_{ml,1}$} &  {Lower band for $\mathcal{B}_{ml}$.} \\
{$B_{ml,2}$} &  {Higher band for $\mathcal{B}_{ml}$.} \\
{$A_{ml,1}$} & {Lower band for $\mathcal{A}_{ml}$.}\\
{$A_{ml,2}$}& { Higher band for $\mathcal{A}_{ml}$.}\\
{$\alpha$} & {Network coverage constraint.}\\
{$\beta$}&  {User connectivity requirement.}\\
{$L^*$}&  {Number of APs based on optimal solution.}\\
{$L^\circ$}& {Number of APs based on greedy solution.}\\
\hline		
		\end{tabular}
	\end{center}
	\label{tab:Symbols}
\end{table}

\section{Greedy Algorithm for the AP Deployment and Beam Steering Problem}\label{Sec:Greedy}

\begin{table}[!t]
  \centering
  \caption{
    \vspace*{-0em}The proposed greedy algorithm for AP deployment and beam steering in mmW networks.}\vspace*{-0cm}
    \begin{tabular}{p{3.3in}}
      \hline \vspace*{-0em}
      \textbf{Inputs:}\,\,Candidate location, $\mathcal{L}$,\\
       Grid position, $\mathcal{M}$,\\
       Orientation probability for each GP, $\Pr(\phi_m)$,\\
       Connectivity requirement for each GP, $\beta_m$,\\
       Network coverage requirement, $\alpha$.\\
       \textbf{1:}\,\,{Assign initial empty set to the selected AP set}, $\mathcal{L}^\circ= \emptyset$.\\
       \textbf{2:}\,\,{Assign initial empty set to the covered GP set}, $\mathcal{M}^\circ= \emptyset$.\\
       \textbf{3:}\,\,While $\sum_{m\in \mathcal{M}} q_m z_m < \alpha \times M$.\\
       \textbf{ }\,\, Select {AP candidate location $i \in \mathcal{L}$, elevation angle $\theta_i$, and azimuthal angle $\phi_i$ }that maximize
       $\sum_{m\in \mathcal{M}\backslash \mathcal{M}^\circ} z_m $.\\
       \textbf{ }\,\, Set $\mathcal{L}=\mathcal{L} \backslash \{i\}$.\\
       \textbf{ }\,\, Set $\mathcal{L}^\circ=\{i\} \cup \mathcal{L}^\circ$.\\
       \textbf{ }\,\, Set $\mathcal{M}^\circ=\cup_{l \in \mathcal{L}^\circ} \mathcal{C}_l$.\\
       \textbf{4:}\,\,Return $\mathcal{L}^\circ$\\
\textbf{Output:}\,\,AP placement set, $\mathcal{L}^\circ$, AP beam steering, and AP assignment to the GP.\vspace*{0em}\\
   \hline
    \end{tabular}\label{SC_A}\vspace{-0.7cm}
\end{table}

Let $\mathcal{F}$ be a family of subsets of $\mathcal{M}$. Every element of $\mathcal{F}$, $\mathcal{C}_l\in \mathcal{F}$,  corresponds to the set of GPs that can be covered if an AP is placed at candidate location $l\in\mathcal{L}$, and its  beam is steered to a direction having $\theta_l$ and $\phi_l$. Suppose that $\mathcal{C}_l$ covers $C_l=\sum_{m\in \mathcal{C}_l}z_m$ GPs. For a given set $\mathcal{L}$ of selected candidate AP locations, the weight of set $\mathcal{C}_l$ is equal to the sum of the users in $\mathcal{C}_l$ whose connectivity requirement is guaranteed.

The problem of placing the least number of APs under network coverage and user connectivity constraints in mmW networks can be seen as a special case of a well-known ``size-constrained weighted set cover'' problem~\cite{thomas2001introduction}. If we pose our problem in (\ref{optprob_Stoc_C}) as a size-constrained weighted set cover problem, then the input will be a set of $M$ GPs, a collection of weighted sets over the GPs, $\mathcal{C}_l$, a size constraint $L$, and a minimum coverage requirement $\alpha$. The output will be a sub-collection of up to $L$ subsets of grid points that have a maximum sum of weights. Then, we propose a greedy algorithm which approximates the size-constrained weighted set cover problem. In the proposed greedy algorithm, we start with a given AP $l$ that covers $\mathcal{C}_l$ with a high weight value which is likely going to be insufficient to cover the desired number of grid point. Then, we iteratively add more APs with highest marginal benefit to guarantee the required coverage constraint in (\ref{network_coverage}). Table~\ref{SC_A} shows the proposed greedy algorithm. The input parameters of the proposed greedy algorithm in Table~\ref{SC_A} are: the set of candidate locations, $\mathcal{L}$, the set of GPs, $\mathcal{M}$, the orientation probability for a user in each GP, $\Pr(\phi_m)$, and the coverage threshold $\alpha$. The objective of line 2-1 in Table~\ref{SC_A} is to select a candidate AP that covers the set of GPs with the highest marginal benefit. Let $\mathcal{L}^\circ$ be the set of candidate locations that are already selected by iteratively greedy algorithm. At each iteration $i$ of the greedy algorithm, an AP is selected as follows:
\begin{align}
& \underset{\left\{\substack{\phi_i,\theta_i,a_{im},y_{im}^{(\omega)},u_{m}^{(\omega)},z_m\\ i \in \mathcal{L},m \in \mathcal{M},
\omega \in \Omega}\right\}}
{\max} \;
\sum_{m\in \mathcal{M}} q_mz_m, \label{Greedprob_Stoc_C}\\
& \hspace{0.2in} \text{s.t.} \nonumber \\
& \hspace{0.2in} \sum_{l\in\{i\}\cup \mathcal{L}^\circ} {y}_{lm}^{(\omega)} a_{lm}\geq (1-u_{m}^{(\omega)}) \; \forall m \in \mathcal{M},\forall \omega \in \Omega, \label{Lin_Stochastic_C1}\\
& \hspace{0.2in} \sum_{\omega \in \Omega} q_m^{(\omega)} u_m^{(\omega)}\leq 1-\beta_m, \forall m \in \mathcal{M},\label{Lin_Stochastic_C1}\\
& \hspace{0.2in} \sum_{\omega \in \Omega} q_m^{(\omega)} (1-u_m^{(\omega)}) \geq z_m \beta_m, \forall m \in \mathcal{M}, \label{optprobc_Stoc_C2} \\
& \hspace{0.2in} 1-\sum_{\omega \in \Omega} q_m^{(\omega)} (1-u_m^{(\omega)}) \geq (1-z_m)(1-\beta_m), \forall m \in \mathcal{M}, \label{optprobc_Stoc_C2} \\
& \hspace{0.2in} -\frac{W}{2}\leq a_{im}(\phi_i-\phi_{im}^{\textrm{TX}})\leq \frac{W}{2},\forall m \in \mathcal{M},\\
& \hspace{0.2in} -\frac{W}{2}\leq a_{im}(\theta_{i}-\psi_{im}^{\textrm{TX}}) \leq \frac{W}{2},\forall m \in \mathcal{M},\\
& \hspace{0.2in} 0 \leq \sum_{m\in \mathcal{M}}a_{im}\leq T, \label{Greedprobc_Stoc_C3}\\
& \hspace{0.2in} a_{im}\in \{0,1\},\forall m \in \mathcal{M}, \label{Greedprobc_Stoc_C5}\\
& \hspace{0.2in} \theta_i\in \Theta ,\phi_i \in \Phi,\label{optprobc_Stoc_C6}\\
& \hspace{0.2in} y_{lm}^{(\omega)},z_m,u_{m}^{(\omega)} \in \{0,1\},\forall l \in\{i\}\cup \mathcal{L}^\circ,\forall m \in \mathcal{M},\forall \omega \in \Omega \label{optprobc_Stoc_C5}
\end{align}

At each iteration $i$ of the proposed algorithm in Table~\ref{SC_A}, the complexity for finding the best AP location and steering its beam is $(L-i)|\Theta| |\Phi|$. The maximum number of iterations of the proposed greedy algorithm in Table~\ref{SC_A} is $L$. Thus, the complexity of the proposed greedy algorithm in Table~\ref{SC_A} is $O(L^2|\Theta||\Phi|)$ which is proportional to the square of the number of access points $L$. Compared to the exponentially growing complexity of exhaustive search for (\ref{optprob_Stoc_C}), the complexity of proposed greedy algorithm is clearly more reasonable. Next, we compute the approximation gap between the proposed greedy algorithm solution in Table~\ref{SC_A} and the optimal solution in (\ref{optprob_Stoc_C}). We define $\mathcal{L}^\circ$ as the set of APs resulting from the proposed algorithm, and the set $\mathcal{C}_i^\circ$ with size $C_i^\circ$ as the set of GPs that are covered by AP $i\in\mathcal{L}^\circ$. Thus, the set of all GPs covered by proposed algorithm is $\mathcal{M}^\circ=\cup_{i\in\mathcal{L}^\circ}\mathcal{C}_i^\circ$. Let $\mathcal{L}^*$ be the set of APs that are found for the optimal solution, and the set $\mathcal{C}_i^*$ with size $C_i^*$ be the set of GPs that are covered by AP $i\in\mathcal{L}^*$. Thus, the set of all GPs covered by optimal solution is $\mathcal{M}^*=\cup_{i\in\mathcal{L}^*}\mathcal{C}_i^*$. Thus, we can state the following theorem for our proposed algorithm in Table~\ref{SC_A}.

\begin{theorem}
\textnormal{The proposed greedy algorithm returns a solution with up to $\frac{\max_{i} C_i^* \times \max_{m \in \mathcal{M}^*} {q_m} }{\min_{i} C_i \times \min_{m \in \mathcal{M}^*} {q_m}}L^*$ APs to cover the same set of GPs that the optimal solution covers.}
\end{theorem}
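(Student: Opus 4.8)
The plan is to sandwich the total presence-weight that must be covered, $W = \sum_{m \in \mathcal{M}^*} q_m$, between a quantity controlled from above by $L^*$ and a quantity controlled from below by $L^\circ$, and then divide to extract the ratio. The bridge between the two solutions is the observation that the greedy algorithm in Table~\ref{SC_A} terminates only once it covers the same grid positions as the optimal solution, so $\mathcal{M}^* = \cup_{i \in \mathcal{L}^*} \mathcal{C}_i^*$ and $\mathcal{M}^\circ = \cup_{i \in \mathcal{L}^\circ} \mathcal{C}_i^\circ$ denote one and the same set of GPs; hence the weight $W$ is common to both and ties $L^\circ$ to $L^*$.

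First I would establish the upper bound. Since $\mathcal{M}^*$ is the union of the $L^*$ optimal coverage sets, the weight of the union is at most the sum of the individual set weights, so $W \leq \sum_{i \in \mathcal{L}^*} w(\mathcal{C}_i^*)$, where $w(\cdot)$ denotes summed presence probability. Each optimal set covers at most $\max_i C_i^*$ grid positions, each carrying presence probability at most $\max_{m \in \mathcal{M}^*} q_m$, giving $w(\mathcal{C}_i^*) \leq \max_i C_i^* \cdot \max_{m \in \mathcal{M}^*} q_m$. Summing over the $L^*$ optimal APs yields $W \leq L^* \, \max_i C_i^* \, \max_{m \in \mathcal{M}^*} q_m$.

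Next I would establish the matching lower bound for the greedy solution, namely $W \geq L^\circ \, \min_i C_i \, \min_{m \in \mathcal{M}^*} q_m$. The intended mechanism is that each of the $L^\circ$ APs selected by the algorithm contributes at least $\min_i C_i$ grid positions, each of weight at least $\min_{m \in \mathcal{M}^*} q_m$, so the objective $\sum_{m} q_m z_m$ increases by at least $\min_i C_i \, \min_{m \in \mathcal{M}^*} q_m$ per iteration and can only reach $W$ after at least that many APs. Combining the two bounds, $L^\circ \, \min_i C_i \, \min_m q_m \leq W \leq L^* \, \max_i C_i^* \, \max_m q_m$, and solving for $L^\circ$ reproduces the claimed ratio exactly.

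The hard part will be making the greedy lower bound rigorous, because the selected coverage sets $\mathcal{C}_i^\circ$ may overlap, so the weight of their union can fall short of the sum of the individual weights, and a naive per-AP accounting would overstate $W$ and break the inequality. I would address this by invoking the greedy selection rule: line 3 of Table~\ref{SC_A} adds, at each step, the AP with the largest marginal gain in $\sum_{m \in \mathcal{M} \backslash \mathcal{M}^\circ} z_m$, and the loop runs only while the coverage target is unmet, so every accepted AP must cover a nonempty block of previously uncovered grid positions, whose count I would lower-bound by $\min_i C_i$ and whose weight I would lower-bound by $\min_{m \in \mathcal{M}^*} q_m$. Some care is needed about whether $\min_i C_i$ is taken over the full coverage sets of all selected APs or over their marginal contributions, including the last and possibly nearly-redundant AP; resolving this bookkeeping is the most delicate step, and it is the one I would scrutinize most carefully.
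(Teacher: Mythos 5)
Your upper bound on $W=\sum_{m\in\mathcal{M}^*}q_m$ is fine, but the lower bound $W\geq L^\circ\,\min_i C_i\,\min_{m\in\mathcal{M}^*}q_m$ --- the half of the sandwich that actually produces the ratio --- is false, and the bookkeeping issue you flagged at the end is exactly where it fails. The greedy rule only guarantees that each accepted AP covers at least one previously uncovered GP, not at least $\min_i C_i$ of them, when $\min_i C_i$ is taken over the full coverage sets $\mathcal{C}_l$ (and that full-set reading is the one the theorem needs; the paper's appendix explicitly passes through $\min_{i\in\mathcal{L}}C_i\leq\min_{i\in\mathcal{L}^\circ}C_i^\circ$). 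Concretely, take two candidate locations with $\mathcal{C}_1=\{1,2,3,4,5\}$ and $\mathcal{C}_2=\{1,2,3,4,6\}$ and uniform weights $q_m=q$: both greedy and the optimum must select both APs, the second greedy AP contributes a single new GP, so $L^\circ=2$ and $\min_i C_i=5$, and your lower bound asserts $W\geq 2\cdot 5\cdot q=10q$ while in truth $W=6q$. Patching the step honestly --- replacing $\min_i C_i$ by the minimum \emph{marginal} gain of a greedy iteration --- restores the inequality but that minimum can be as small as $1$, so you only recover a strictly weaker bound than the stated theorem, not the claimed ratio.

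There is also a structural reason the sandwich cannot reach the theorem: your two bounds never use the optimality of the greedy choice, so if the argument were valid it would bound \emph{any} heuristic that covers at least $\min_i C_i$ new GPs per round, and the counterexample shows no such per-round guarantee exists. The paper's proof works differently: it is a charging (pricing) argument in the style of the classical set-cover analysis. Each GP $m$ first covered at iteration $i$ is assigned a price $c_m$ tied to the marginal weight gain of the greedy set $\mathcal{C}_i^\circ$; every price is lower-bounded by $1/\max_{m}q_m$, and, crucially, the greedy dominance $\alpha_i(\mathcal{C}^\circ_i)-\alpha_{i-1}(\mathcal{C}^\circ_i)\geq\alpha_i(\mathcal{C}^*_l)-\alpha_{i-1}(\mathcal{C}^*_l)$ lets the total price be charged to the optimal sets, with each $\mathcal{C}^*_l$ absorbing at most $C_l^*\leq\max_i C_i^*$ charges of size at most $1/\min_{m}q_m$. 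That comparison-to-optimum step, which couples the greedy iterations to the $L^*$ optimal sets, is the engine your proposal omits and cannot be replaced by per-AP counting. (To be fair, the paper's final inequality also quietly lower-bounds the number of first-covered GPs per greedy AP by $\min_i C_i$, a step vulnerable to the same overlap objection; but the charging machinery, not the weight sandwich, is what carries the $\max_i C_i^*$ and the weight ratio into the bound.)
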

\begin{proof}
\textnormal{{See the Appendix A.}}
\end{proof}

The result of Theorem 1 means that the ratio of number of APs selected by greedy algorithm to the optimal solution becomes less when the ration of the maximum number of GPs per AP from optimal solution to the minimum number of GPs per AP from the greedy algorithm becomes less. This ratio depends on how the beamwidths of the APs from the optimal solution and greedy algorithm are selected.

\section{Simulation Results and Analysis}\label{Sec:Simulation}
For our simulations, we consider three in-venue scenarios: the meeting room of Alumni Assembly Hall of Virginia Tech~\cite{Assembly_Hall_ref}, an airport gate, and one side of a stadium football. For this setting, the main-lobe and side-lobe antenna gains are set to $18$~dB and $-2$~dB, respectively~\cite{singh2015tractable}. The path loss $\kappa$ for 1 meter of distance is 70dB, path loss exponents for LoS and non-LoS mmW links, $\alpha_L$ and $\alpha_N$, are 2 and 4, and the standard deviations of path losses over LoS and non-LoS mmW links, $\varepsilon_L$ and $\varepsilon_N$, are 5.2 and 7.6~\cite{Channel_model_value}.We assume that the orientation of an MD at each GP is a random variable between $-\pi$ and $\pi$ that follows a truncated Gaussian distribution whose mean value is the azimuthal angle toward the direction of the seat in the horizon plane. We assume that the beamwidth of MD's antenna is $\frac{\pi}{2}$.
\subsection{Alumni Assembly Hall}
The meeting room of Alumni Assembly Hall of Virginia Tech has 135 seats as shown in Fig.~\ref{Alumni_Assembly_center}. The total area of the meeting space is 1000 sq. meters with the height of the ceiling ranging from 3.40 meter to 4.37 meter. The number of candidate APs which are on a grid-like structure on the ceiling is 20. The direction of GPs is toward the center of the front stage in the horizon plane. Fig.~\ref{Alumni_Assembly_center} shows an illustrative example of AP placement and beam steering resulting from both the optimal solution and the greedy algorithm.

The example in Fig.~\ref{Alumni_Assembly_center} is for an AP beamwidth of $\frac{2\pi}{3}$, an MD beamwidth of $\frac{\pi}{3}$, a network coverage of $\alpha=0.75$, and a user connectivity of $\beta=0.95$. Since most of the time the users' orientation is turned toward the center of the front stage, the APs of both the optimal solution and the greedy algorithm are placed at the front of the Assembly Hall (See: Fig.~\ref{Alumni_Assembly_center}).
\begin{figure}[t!]
  \centering
  \includegraphics[width=.5\linewidth]{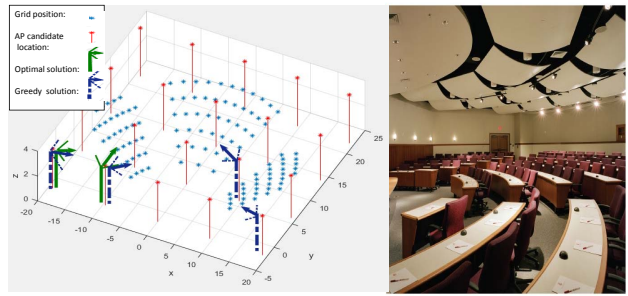}
  \caption{AP deployment and beam steering for the meeting room of Alumni Assembly Hall of Virginia Tech when $W=\frac{10\pi}{15}$, $w=\frac{\pi}{2}$, $\alpha=0.9$, $\beta=0.9$.}
  \label{Alumni_Assembly_center}
\end{figure}%

In Fig.~\ref{N_AP_Hall}, we show the number of required APs for different network coverage and user connectivity constraints versus the AP beamwidth. Fig.~\ref{N_AP_Hall} shows that, when the network coverage and user connectivity constraints increase, the number of required APs also increases. Moreover, for a network coverage $\alpha=0.95$ and user connectivity constraint of $0.9$, the greedy algorithm uses at most two additional AP compared to the optimal solution, while the greedy algorithm uses one additional AP compared to the optimal solution for $\alpha=0.65$ and $\beta=0.7$. {Moreover, the uniform solution uses more APs compared to the optimal and greedy solutions. Fore example, the uniform solution uses 16 APs for $\alpha=0.95$ and $\beta=0.9$ to guarantee the network coverage constraint.}
\begin{figure}[t!]
  \centering
  \includegraphics[width=.5\linewidth]{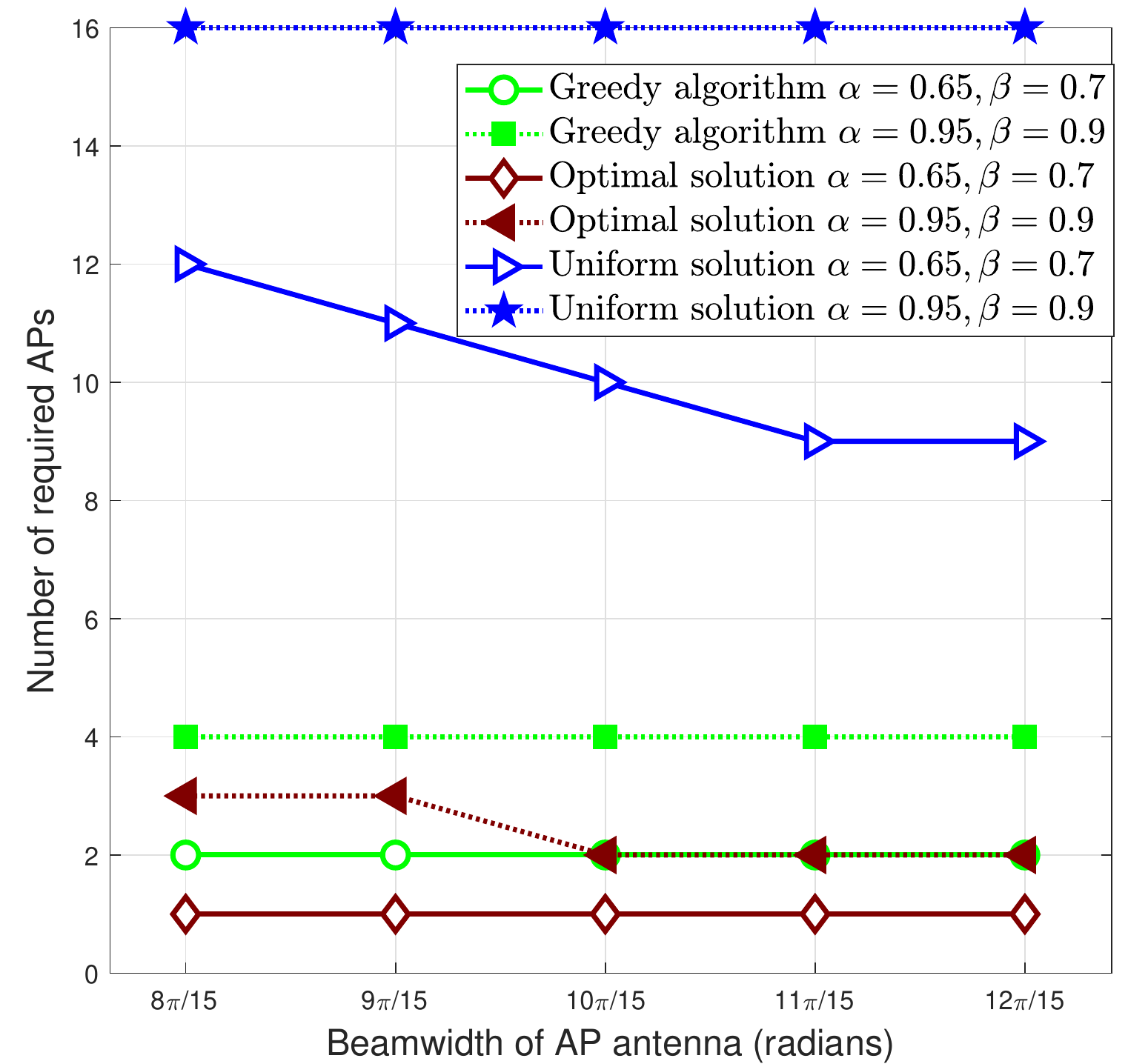}
  \caption{{Number of required APs vs. beamwidth of AP for the meeting room in the Alumni Assembly Hall of Virginia Tech.}}
  \label{N_AP_Hall}
\end{figure}%

In Fig.~\ref{Coverage_Hall}, we show the network coverage versus the coverage constraint. From Fig.~\ref{Coverage_Hall}, we can see that the greedy algorithm guarantees more network coverage compared to the optimal solution. This is an expected result that stems from the fact that more APs are generally deployed by the greedy algorithm than by the optimal solution. When the beamwidth of APs decreases from $\frac{12\pi}{15}$ to $\frac{8\pi}{15}$, the gap of network coverage between the greedy algorithm and the optimal solution increases. From this figure, we can see that, due to the use of additional APs, the greedy algorithm will yield a gain of up to $4\%$ in the network coverage. {Moreover, the network coverage resulting from the greedy algorithm is $13\%$ greater than the uniform solution, in spite of the fact that the greedy solution uses fewer APs than the uniform case (see Fig.~\ref{N_AP_Hall}). This is due to the fact that the coverage of mmW is limited by the possibility of LoS mmW links between the AP and MDs, and the uniform solution does not capture the stochastic changes and blockages in the mmW network.}
\begin{figure}[t!]
  \centering
  \includegraphics[width=.5\linewidth]{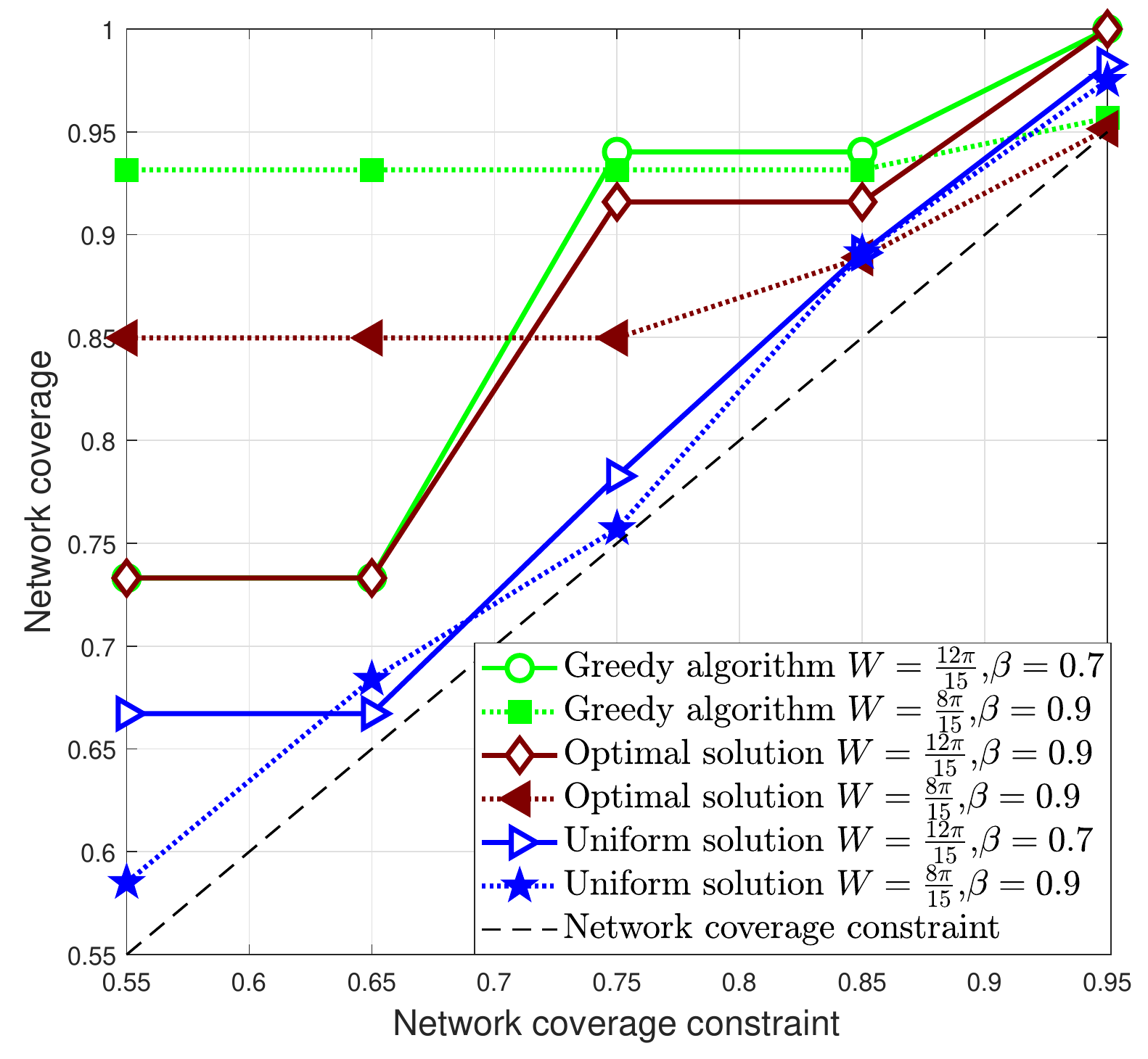}
  \caption{Network coverage vs. network coverage constraint for the meeting room in the Alumni Assembly Hall of Virginia Tech.}
  \label{Coverage_Hall}
\end{figure}

In Fig.~\ref{Band_Hall}, we show the number of AP ratio versus the coverage constraint for different AP beamwidths, when $\beta=0.7$. From Fig.~\ref{Band_Hall}, we can see that the number of AP ratio from simulation results is lower than analytical results for maximum number of AP ratio. By increasing the network coverage constraint, the analytical number of AP ratio also increases. As we can see from Fig.~\ref{Band_Hall}, the difference between analytical and simulation results becomes more pronounced when the AP beamwidth decreases to $W=\frac{8\pi}{15}$. This is due to the fact that, by decreasing the AP beamwidth, the number of required APs increases (see Fig.~\ref{N_AP_Hall}). As such, the additional APs with narrow beamwidth may cover fewer GPs to guarantee the network coverage constraint. Thus, based on Theorem 1, $\frac{\max_{i} C_i^*}{\min_{i} C_i}$ and also the number of AP ratio may increase.
\begin{figure}[t!]
  \centering
  \includegraphics[width=.5\linewidth]{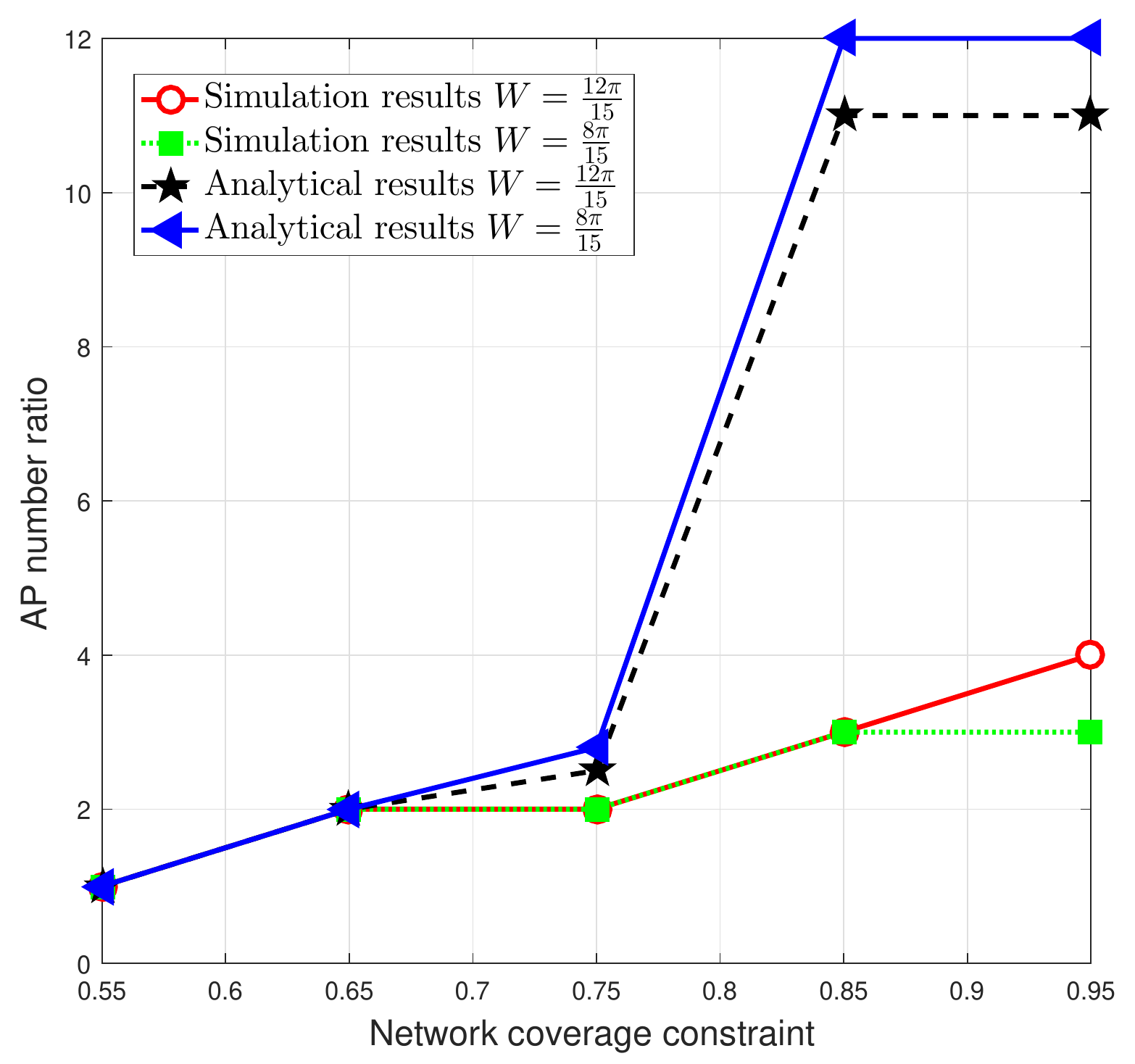}
  \caption{Approximation gap vs. network coverage constraint for the meeting room in the Alumni Assembly Hall of Virginia Tech.}
  \label{Band_Hall}
\end{figure}

In Fig.~\ref{Locdiff_Hall}, we show the AP location difference versus the coverage constraint. From Fig.~\ref{Locdiff_Hall}, we can see that when the network coverage constraint increases, the AP location difference will be more. Fore example, when $\beta=0.7, W=\frac{12\pi}{15}$, and $\alpha=0.55$, the AP location difference is $0$. While the AP location difference is $66\%$ when $\beta=0.7, W=\frac{12\pi}{15}$, and $\alpha$ increases to $0.95$. This is due to the fact that, when the network coverage constraint increases, the greedy solution uses more APs compared to the optimal solution. However, when the network coverage constraint is low and $\beta=0.7$, the greedy and optimal solutions use the same set of APs. Another interesting result from Fig.~\ref{Locdiff_Hall} is that as the user connectively requirement increases, the AP location difference is high. In addition to these, the effect of antenna beamwidth on the AP location difference is negligible.
\begin{figure}[t!]
  \centering
  \includegraphics[width=.5\linewidth]{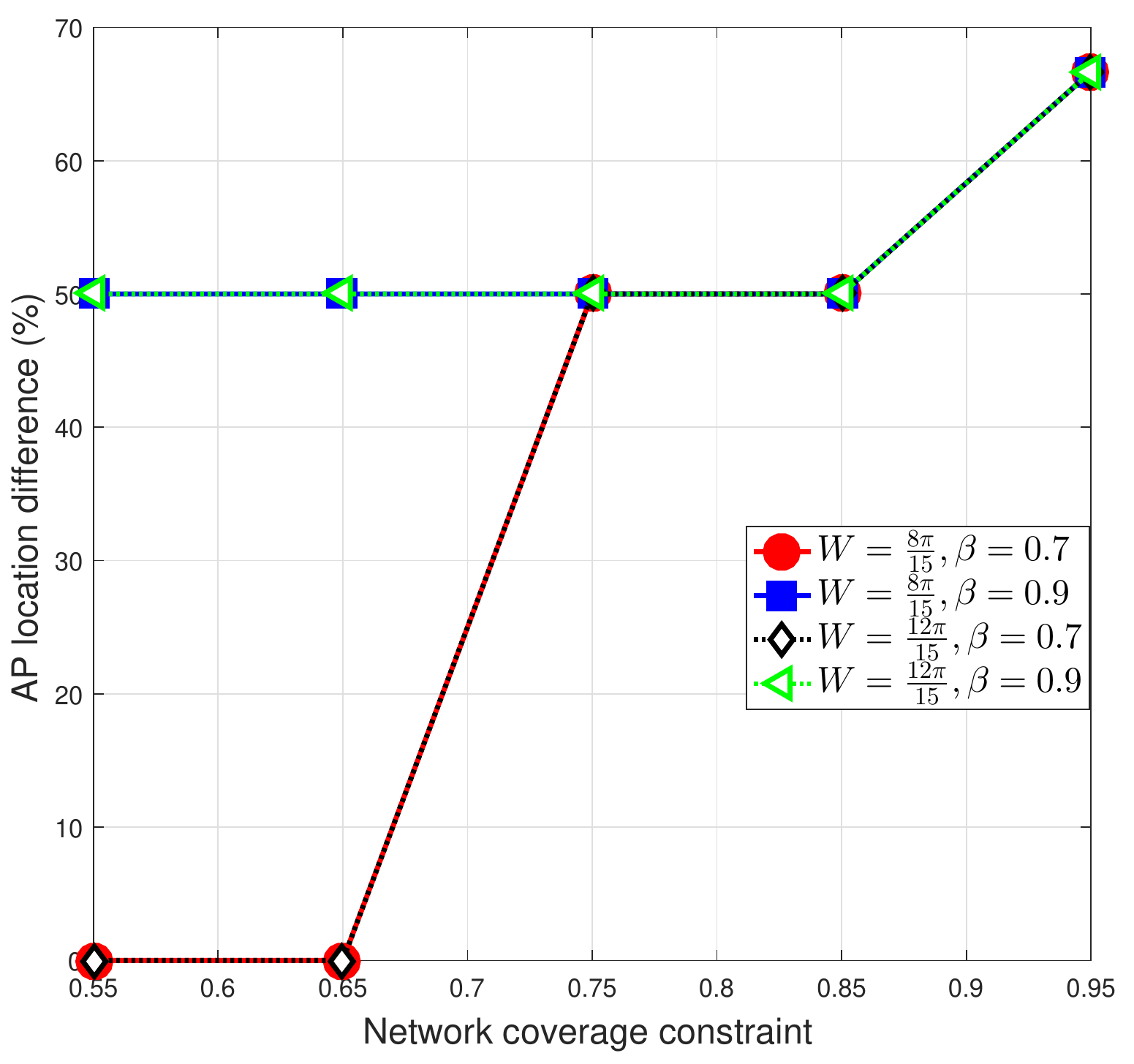}
  \caption{AP location difference vs. network coverage constraint for the meeting room in the Alumni Assembly Hall of Virginia Tech.}
  \label{Locdiff_Hall}
\end{figure}

\subsection{Airport Gate}
The considered airport gate has 160 seats as shown in Fig.~\ref{Airportgate}. The total area of the airport gate is 500 sq. meters with a ceiling height of 10 meters. The number of candidate APs which are on a grid-like structure on the ceiling is 16. In each row of seats, two back-to-back seats are placed while having opposite directions.  Fig.~\ref{Airportgate} shows an illustrative example of AP placement and beam steering resulting from both the optimal solution and the greedy algorithm.
\begin{figure}[t!]
  \centering
  \includegraphics[width=.5\linewidth]{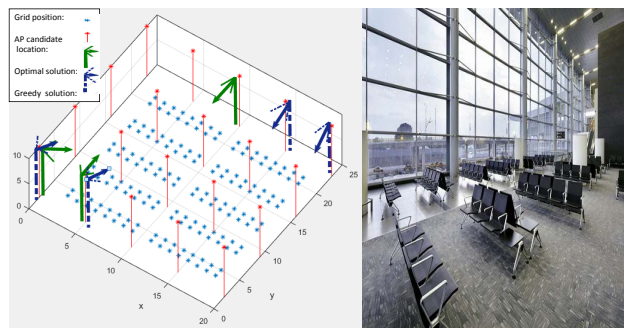}
  \caption{AP deployment and beam steering for an airport gate when $W=\frac{12\pi}{15}$, $w={\pi}$, $\alpha=0.9$, $\beta=0.9$.}
  \label{Airportgate}
\end{figure}

In Fig.~\ref{N_AP_Airportgate}, we show the number of required APs for different network coverage and user connectivity constraints versus the AP beamwidth. Fig.~\ref{N_AP_Airportgate} shows that, when the network coverage and user connectivity constraints increase, the number of required APs also increases. From this figure, we can see that the maximum difference between the number of APs under our proposed greedy solution and optimal one is $2$ for different network coverage when $\beta=0.7$. When the user connectivity requirement, $\beta$, increases from $0.7$ to $0.9$, the greedy solution uses at most 3 more APs compared to the optimal solution for different network coverage.
\begin{figure}[t!]
  \centering
  \includegraphics[width=.5\linewidth]{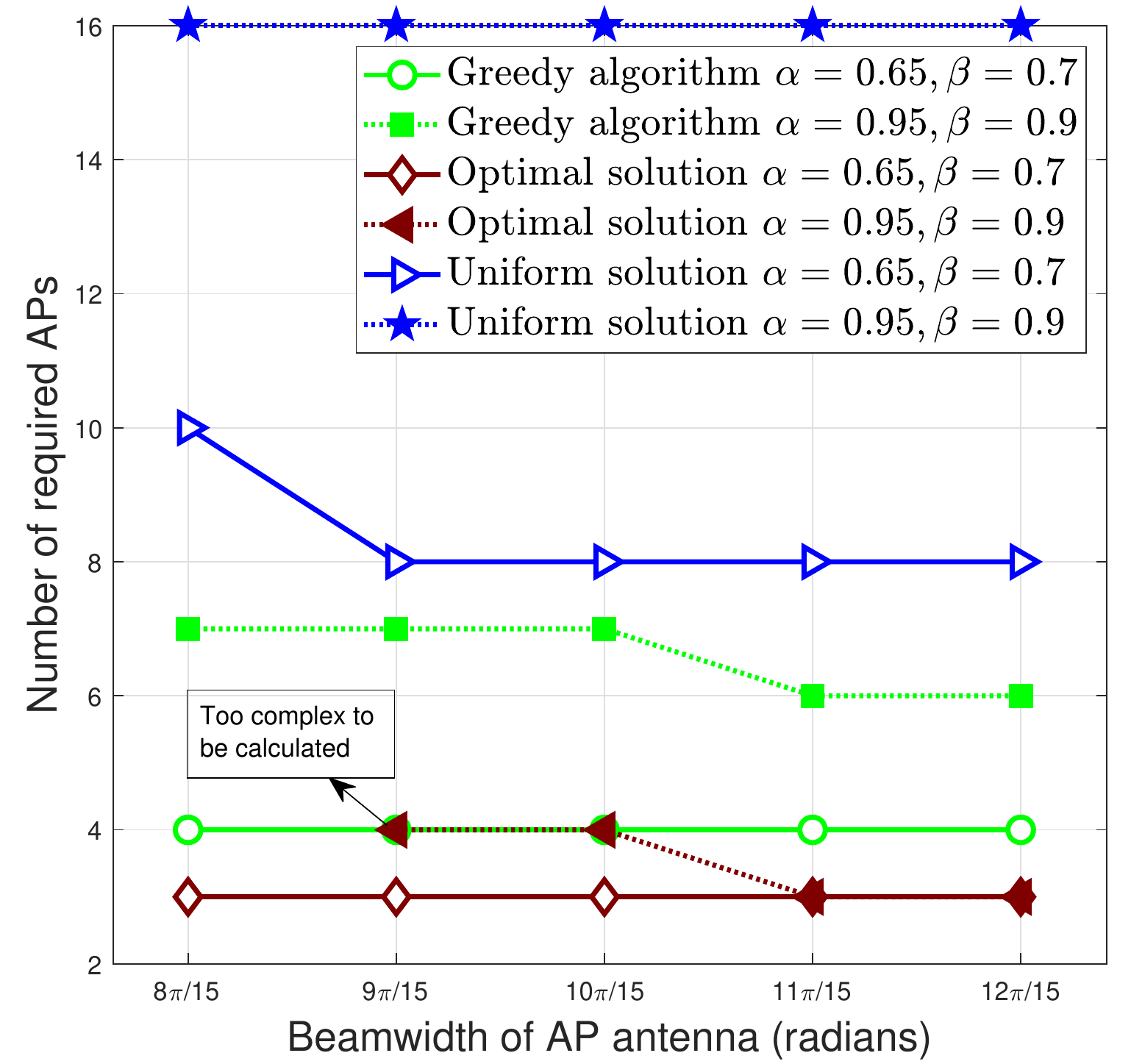}
  \caption{{Number of required APs vs. beamwidth of AP for an airport gate.}}
  \label{N_AP_Airportgate}
\end{figure}%

In Fig.~\ref{Coverage_Airportgate}, we show the network coverage versus the coverage constraint. From Fig.~\ref{Coverage_Airportgate}, we can see that the greedy algorithm guarantees more network coverage compared to the optimal solution because the greedy algorithm uses more APs. Moreover, the optimal solution slightly changes with respect to the beamwidth of APs or user's connectivity constraint. From this figure, we can see that, due to the use of additional APs, the greedy algorithm will yield a gain of up to $11.7\%$ in the network coverage. {Moreover, the uniform solution cannot guarantee the network coverage when the required network coverage increases.}
\begin{figure}[t!]
  \centering
  \includegraphics[width=.5\linewidth]{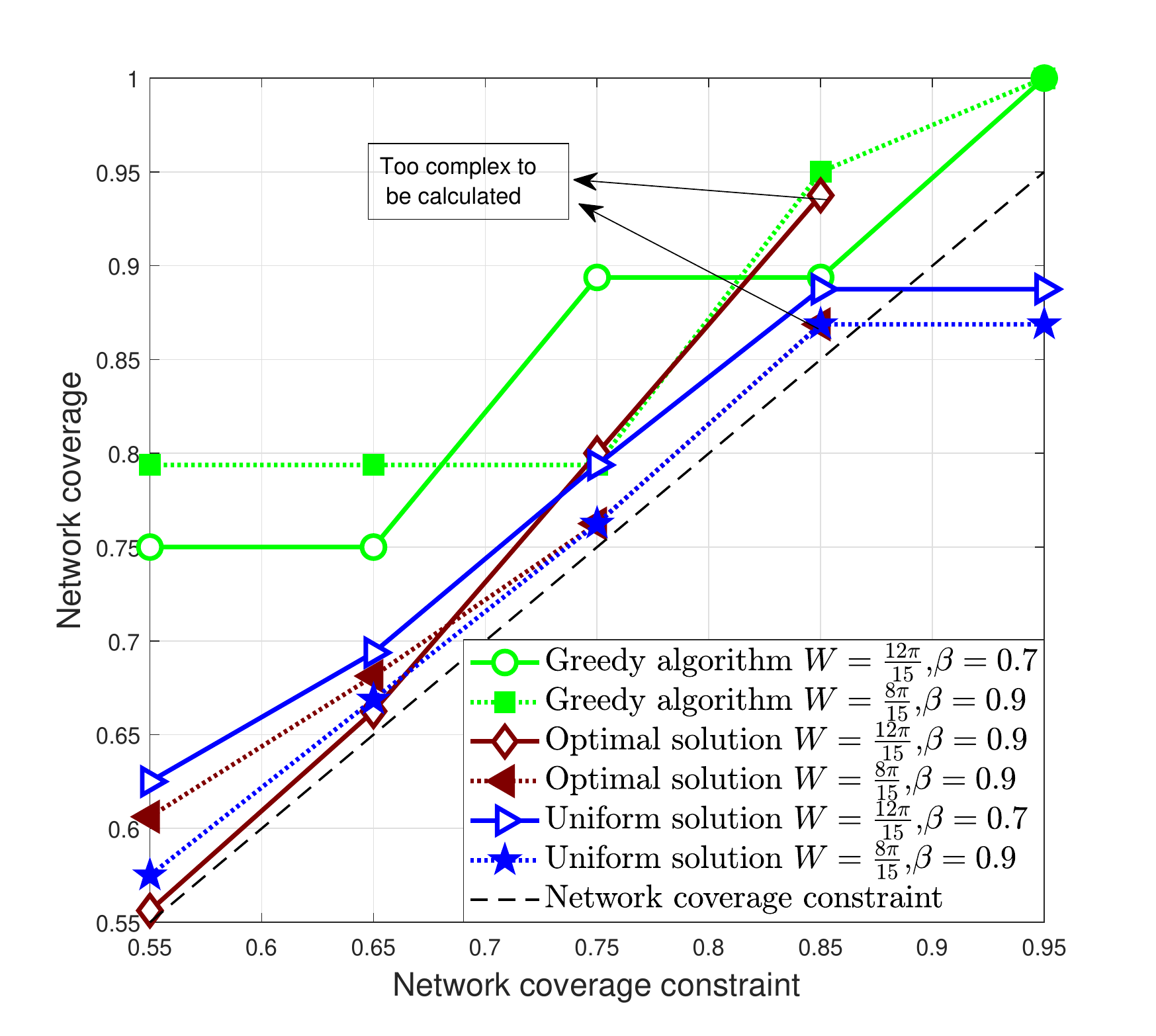}
  \caption{Network coverage vs. network coverage constraint for an airport gate.}
  \label{Coverage_Airportgate}
\end{figure}

In Fig.~\ref{Band_Airportgate}, we show the number of AP ratio versus the coverage constraint for different AP beamwidths, when $\beta=0.7$. From Fig.~\ref{Band_Airportgate}, we can see that the number of AP ratio from simulation results is less than analytical results. Moreover, a more network coverage constraint leads to a higher ratio resulting from the analytical derivations.
\begin{figure}[t!]
  \centering
  \includegraphics[width=.5\linewidth]{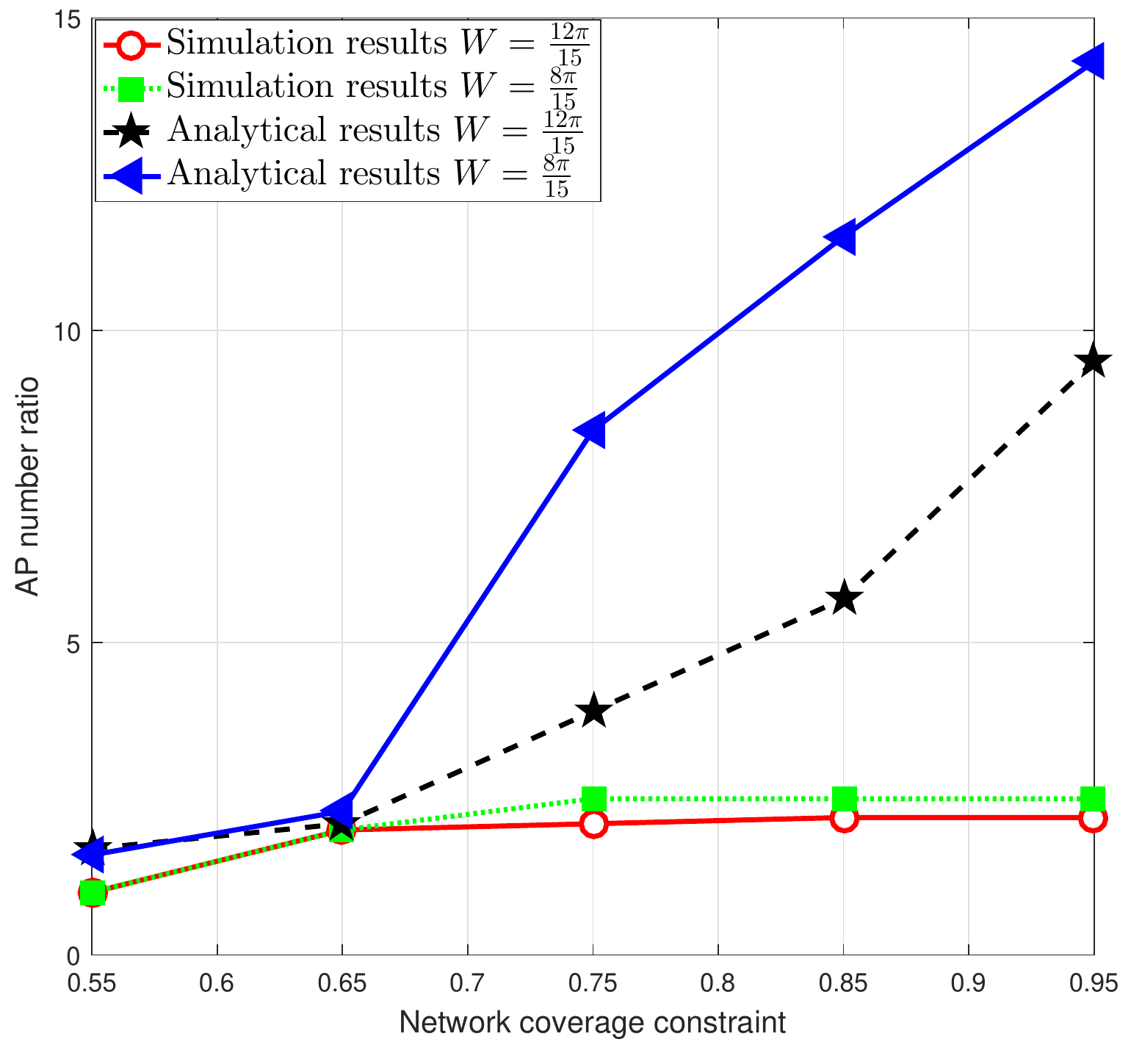}
  \caption{Approximation gap vs. network coverage constraint for an airport gate.}
  \label{Band_Airportgate}
\end{figure}

In Fig.~\ref{Locdiff_Airportgate}, we show the AP location difference versus the coverage constraint. From Fig.~\ref{Locdiff_Airportgate}, we can see that, when the network coverage constraint increases, the AP location difference increases. Fore example, when $\beta=0.7, W=\frac{12\pi}{15}$, and $\alpha=0.55$, the AP location difference is $0$. However, the AP location difference becomes $25$ when $\beta=0.7, W=\frac{12\pi}{15}$, and $\alpha$ increases to $0.95$. This stems from the fact that the greedy solution deploys more APs compared to the optimal solution when the network coverage constraint increases.
\begin{figure}[t!]
  \centering
  \includegraphics[width=.5\linewidth]{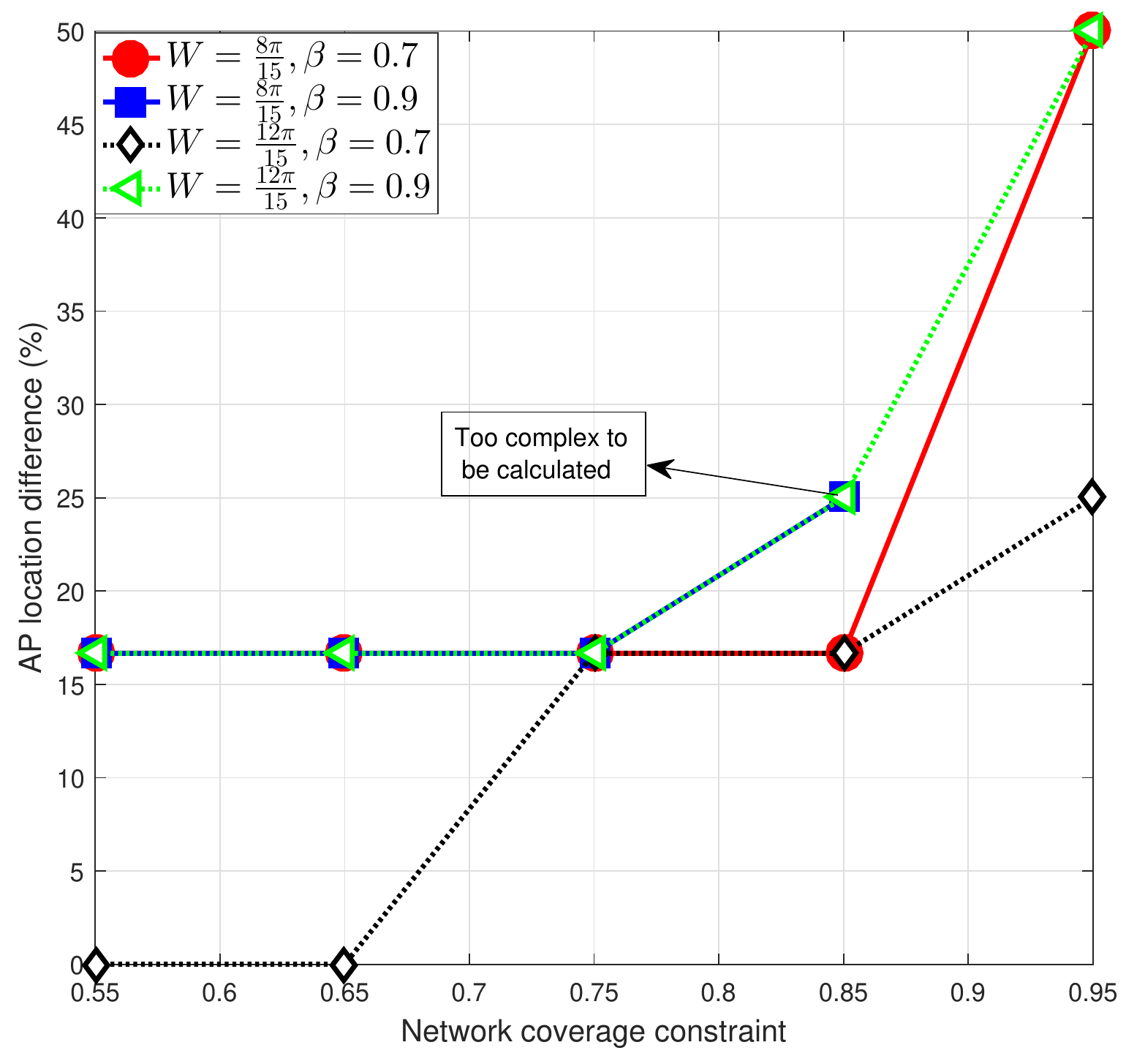}
  \caption{AP location difference vs. network coverage constraint for an airport gate.}
  \label{Locdiff_Airportgate}
\end{figure}

\subsection{Football stadium}
We consider on side of a football stadium with 1040 seats as shown in Fig.~\ref{Footballstadiom}. The height of seats is from 5 to 35 meter. The number of candidate APs which are on a grid-like structure is 16 with the height of the ceiling being 45 meter. The direction of the seats is oriented toward the football field. Fig.~\ref{Footballstadiom} shows an illustrative example of AP placement and beam steering resulting from both the optimal solution and the greedy algorithm. In practice, the UAVs as flying mmW APs can stop at the locations resulting from our proposed algorithm during the game in the open-roof football stadium~\cite{Mozaffari2019}.

\begin{figure}[t!]
  \centering
  \includegraphics[width=.5\linewidth]{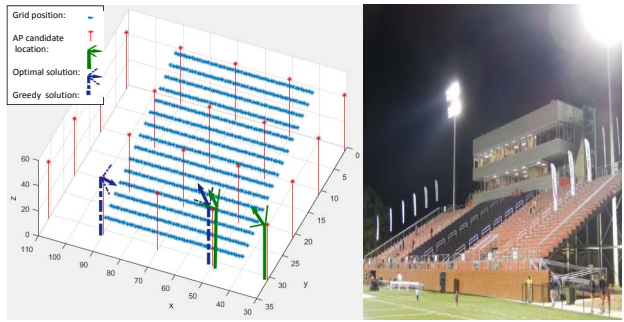}
  \caption{AP deployment and beam steering for a football stadium when $W=\frac{10\pi}{15}$, $w=\frac{\pi}{2}$, $\alpha=0.9$, $\beta=0.7$.}
  \label{Footballstadiom}
\end{figure}%

In Fig.~\ref{N_AP_Footballstadiom}, we show the number of required APs for different network coverage and user connectivity constraints when the AP beamwidth changes. Fig.~\ref{N_AP_Footballstadiom} shows that the more network coverage and user connectivity constraints lead to a higher number of required APs. In addition, for different network coverage and user connectivity constraints, the greedy algorithm uses three additional AP compared to the optimal solution.
\begin{figure}[t!]
  \centering
  \includegraphics[width=.5\linewidth]{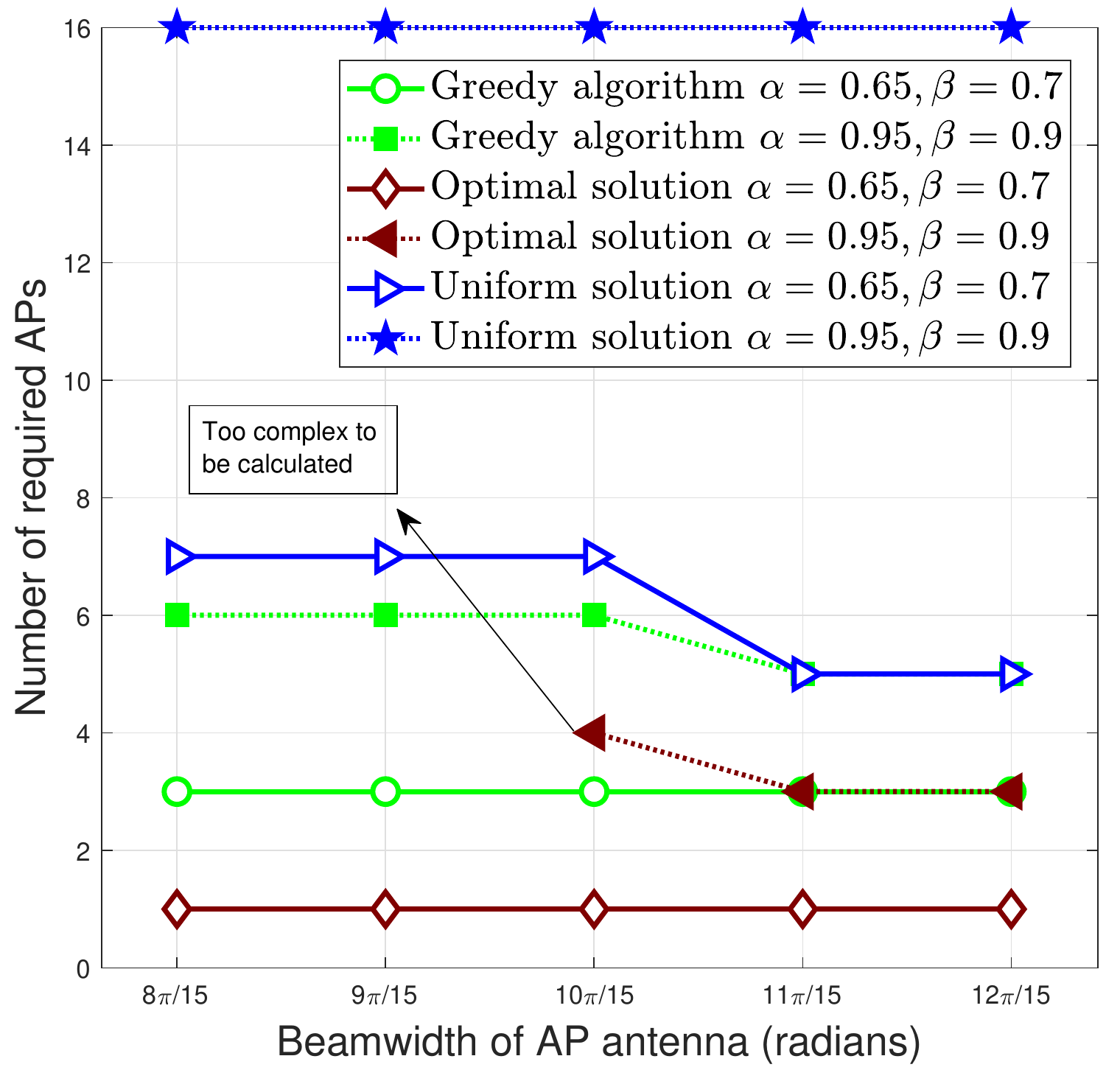}
  \caption{{Number of required APs vs. beamwidth of AP for a football stadium.}}
  \label{N_AP_Footballstadiom}
\end{figure}%

In Fig.~\ref{Coverage_Footballstadiom}, we show the network coverage versus the coverage constraint. From Fig.~\ref{Coverage_Footballstadiom}, we can see that the greedy algorithm guarantees more network coverage compared to the optimal solution because the number of APs from the greedy algorithm are more than the optimal solution (See Fig.~\ref{N_AP_Footballstadiom}). When the beamwidth of APs is small, $\frac{8\pi}{15}$, and the user connectivity constraint is high, $\beta=0.9$, calculating the optimal solution becomes very complex. This is due to the fact that the blockage of nearby users is high for football stadium scenario. From this figure, we can see that, due to the use of additional APs, the greedy algorithm will yield a gain of up to $8\%$ in the network coverage. {Moreover, the network coverage resulting from uniform solution is smaller than the greedy case, although all of the 16 APs are used by the uniform solution (see Fig.~\ref{N_AP_Footballstadiom}).}
\begin{figure}[t!]
  \centering
  \includegraphics[width=.5\linewidth]{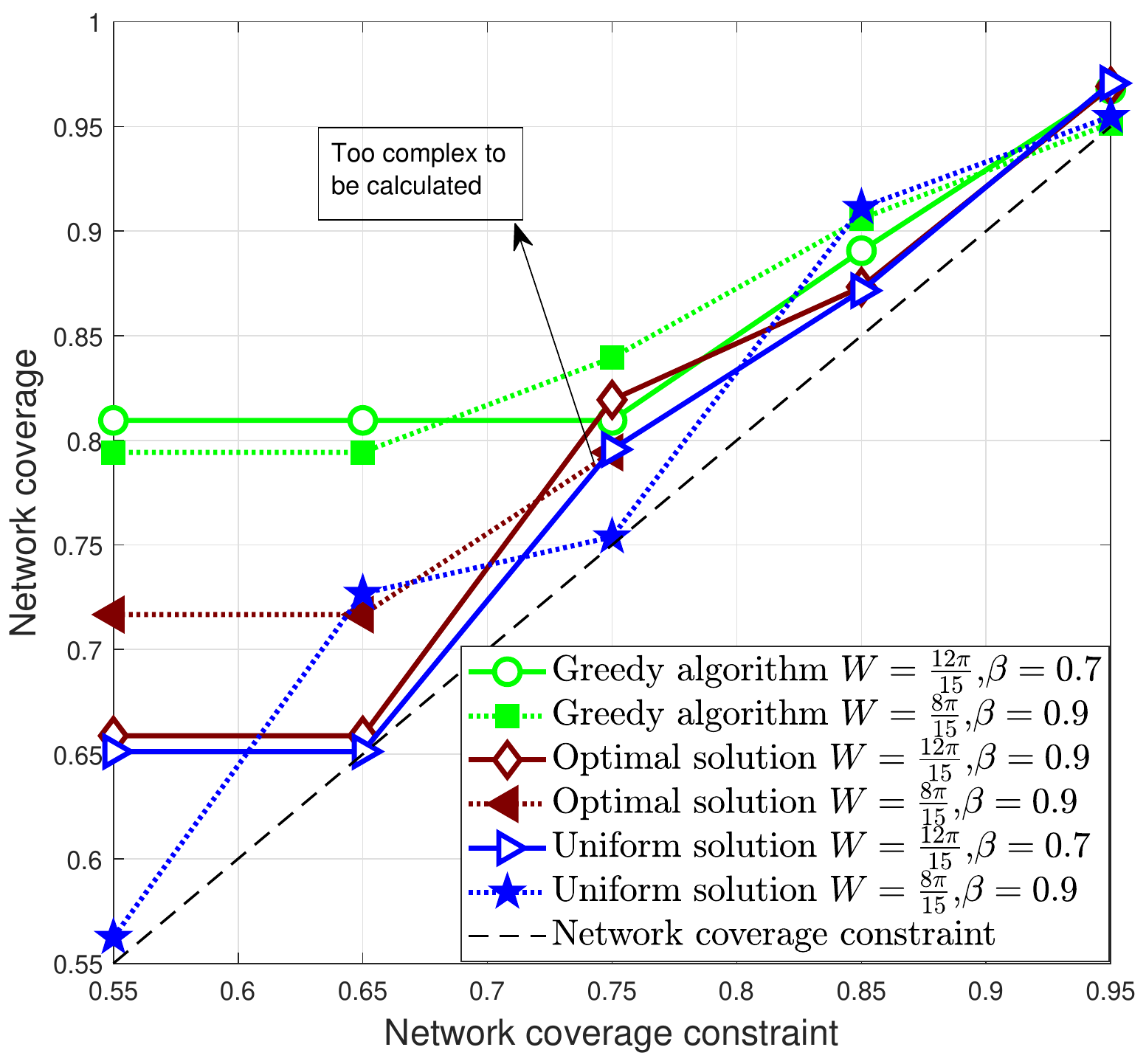}
  \caption{Network coverage vs. network coverage constraint for a football stadium.}
  \label{Coverage_Footballstadiom}
\end{figure}

In Fig.~\ref{Band_Footballstadiom}, we show the number of AP ratio versus the coverage constraint for different AP beamwidth when $\beta=0.7$. From Fig.~\ref{Band_Footballstadiom}, we can see that the ratio derived from analytical results is higher than the one resulting from simulations. By increasing the network coverage constraint the analytical number of AP ratio also increases. As we can see from Fig.~\ref{Band_Footballstadiom}, the difference between analytical and simulation results remains almost same when the beamwidth of AP increases from $W=\frac{8\pi}{15}$ to $W=\frac{12\pi}{15}$.
\begin{figure}[t!]
  \centering
  \includegraphics[width=.5\linewidth]{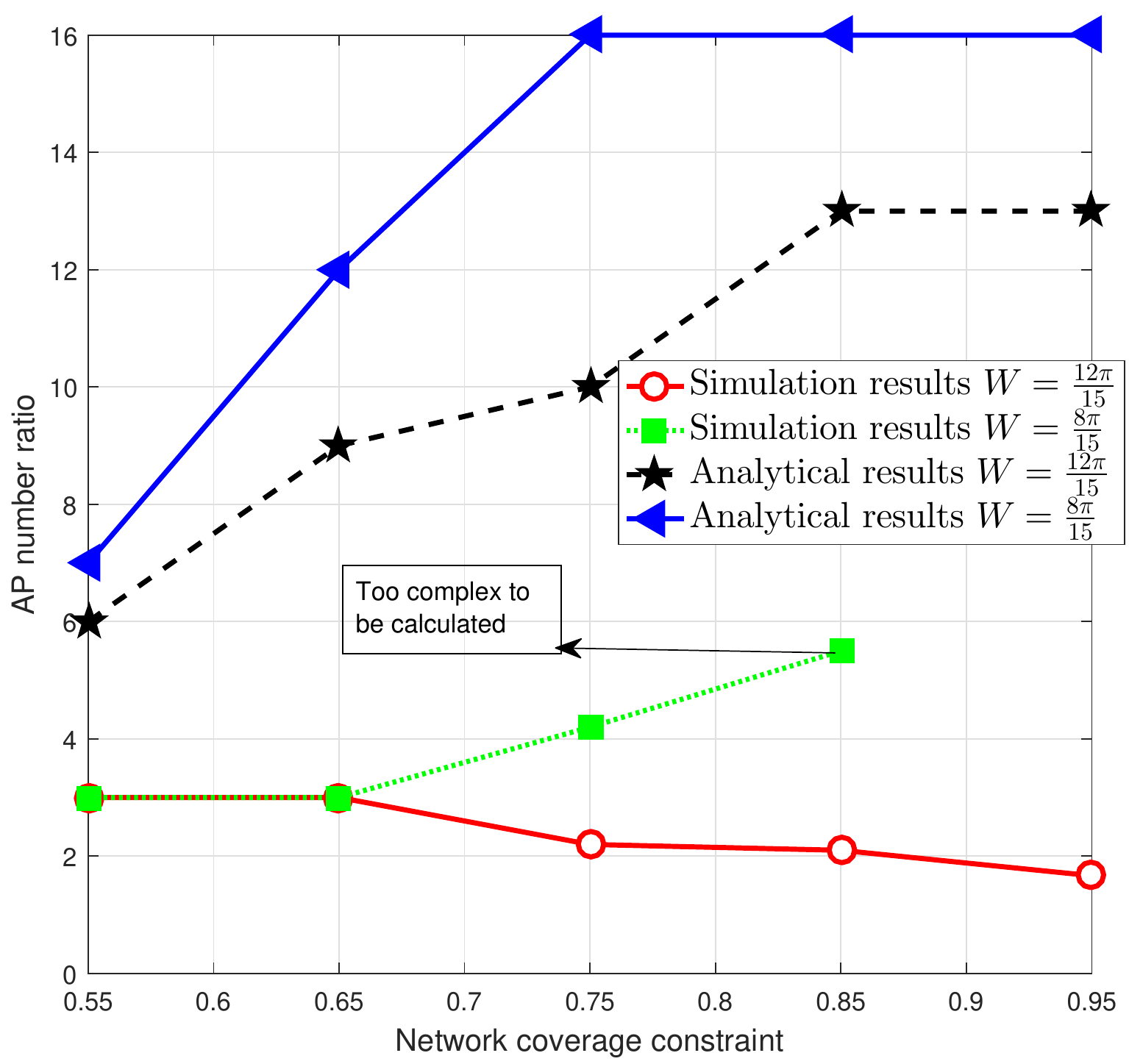}
  \caption{Approximation gap vs. network coverage constraint for a football stadium.}
  \label{Band_Footballstadiom}
\end{figure}

In Fig.~\ref{Locdiff_Footballstadiom}, we show the AP location difference versus the coverage constraint. From Fig.~\ref{Locdiff_Footballstadiom}, we can see that when the network coverage constraint increases, the AP location difference becomes more. Fore example, when $\beta=0.7, W=\frac{12\pi}{15}$, and $\alpha=0.55$ the AP location difference is $0$. But when $\beta=0.7, W=\frac{12\pi}{15}$, and $\alpha=0.95$, the AP location difference becomes $50\%$. This is due to the fact when the network coverage constraint increases, the greedy solution deploys more APs compared to the optimal solution. In addition, the more user connectivity requirement leads to a higher AP location difference. Moreover, the antenna beamwidth does not affect the AP location difference.
\begin{figure}[t!]
  \centering
  \includegraphics[width=.5\linewidth]{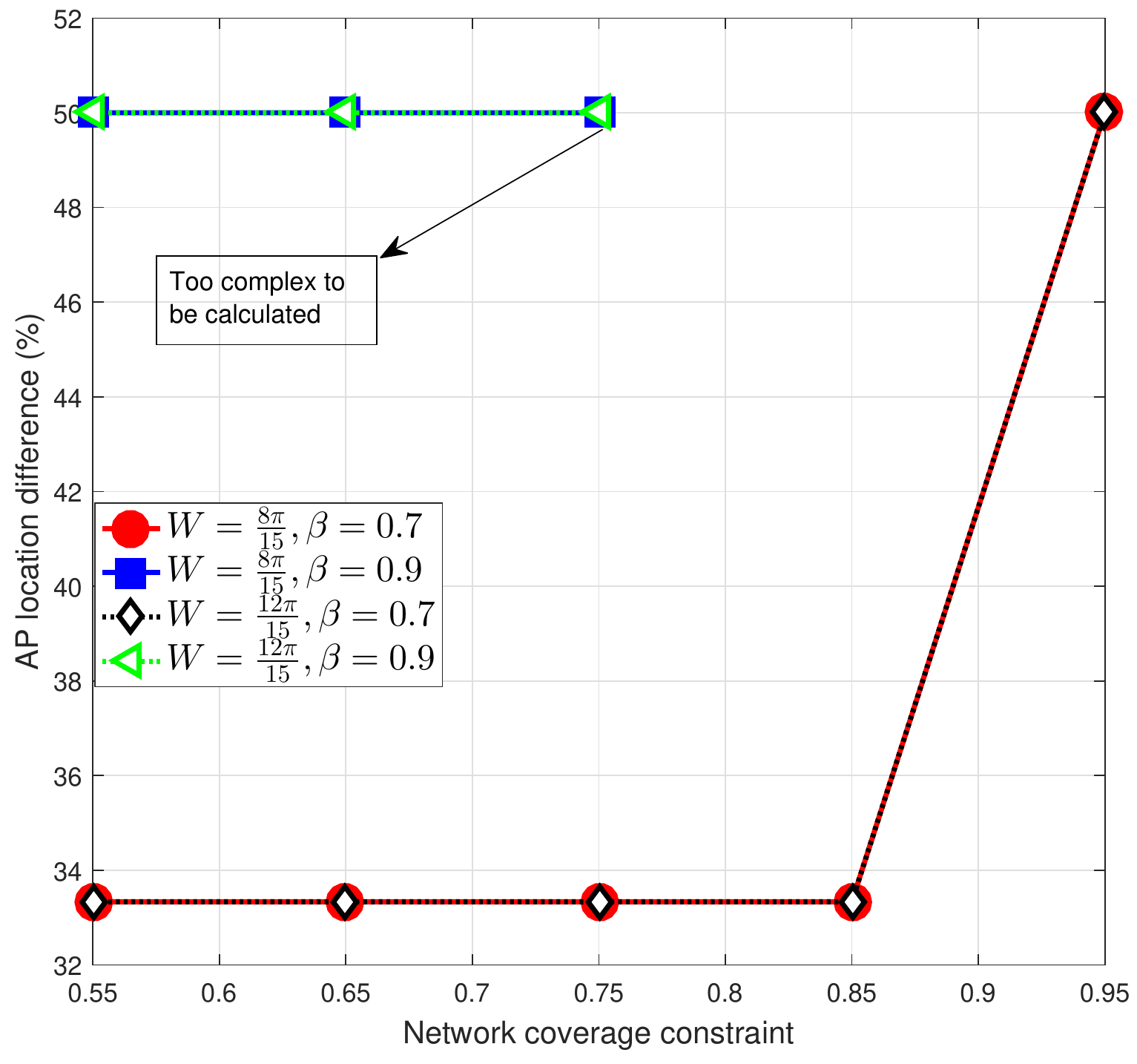}
  \caption{AP location difference vs. network coverage constraint for a football stadium.}
  \label{Locdiff_Footballstadiom}
\end{figure}
\section{Conclusion}\label{Sec:Conclusion}
In this paper, we have studied a joint access point placement and beam steering problem whose goal is to minimize the number of required access points and guarantee coverage for in-venue mmW networks. In the studied model, the availability of LoS mmW links between the APs and the MDs stochastically changes due to the random orientation of the users and the blockage of mmW signals by the users' bodies. First, we have formulated a joint stochastic access point placement and beam steering problem subject to the stochastic user orientation and network coverage constraints. Then, we have designed a greedy algorithm based on size constrained weighted set cover to solve the joint stochastic access point placement and beam steering problem. The approximation ratio between optimal and approximation solutions is derived in closed-form in which the wider beamwidth of APs leads to a smaller approximation gap. Simulation results demonstrate the effectiveness of the proposed approach. For example, the greedy algorithm uses at most 2 additional APs in the Alumni Assembly Hall of Virginia Tech and football stadium, and 3 additional APs in one side of an airport gate compared to the optimal solution in order to guarantee coverage constraint. Moreover, although the greedy algorithm uses additional APs compared to the optimal solution, the greedy algorithm will yield a network coverage that is about $3\%,11.7\%$, and $8\%$ better than the optimal, AP-minimizing solution, for the meeting room in the Alumni Assembly Hall of Virginia Tech, airport gate, and one side of the football stadium, respectively. Another practical gain of our proposed solution is that the complexity of greedy algorithm is much lower than the optimal solution. Due to this fact, solving the optimal solution will be very complex for the scenarios in which the network coverage constraint and user connectivity requirement are high. However, the low-complex greedy algorithm can find the sub-optimal solution much faster than optimal one in the considered scenarios. Future work can extend this approach to cases in which the locations of MDs stochastically change due to the users' mobility, the antenna beamwidth of MD changes based on holding the MD in hand, near head, or in pocket, as well as the optimal deployment of flying mmW access points is required~\cite{7412759}. Another important future work is to take into account the problem of resource allocation, after the network deployment process.

\bibliographystyle{IEEEtran}
\def\baselinestretch{0.9}


\begin{thebibliography}{10}
\providecommand{\url}[1]{#1}
\csname url@samestyle\endcsname
\providecommand{\newblock}{\relax}
\providecommand{\bibinfo}[2]{#2}
\providecommand{\BIBentrySTDinterwordspacing}{\spaceskip=0pt\relax}
\providecommand{\BIBentryALTinterwordstretchfactor}{4}
\providecommand{\BIBentryALTinterwordspacing}{\spaceskip=\fontdimen2\font plus
\BIBentryALTinterwordstretchfactor\fontdimen3\font minus
  \fontdimen4\font\relax}
\providecommand{\BIBforeignlanguage}[2]{{%
\expandafter\ifx\csname l@#1\endcsname\relax
\typeout{** WARNING: IEEEtran.bst: No hyphenation pattern has been}%
\typeout{** loaded for the language `#1'. Using the pattern for}%
\typeout{** the default language instead.}%
\else
\language=\csname l@#1\endcsname
\fi
#2}}
\providecommand{\BIBdecl}{\relax}
\BIBdecl

\bibitem{ours2}
M.~N. Soorki, A.~MacKenzie, and W.~Saad, ``Millimeter wave network coverage
  with stochastic user orientation,'' \emph{in Proc. of the IEEE International
  Symposium on Personal, Indoor and Mobile Radio Communications (PIMRC)},
  Montreal, Canada October, 2017.

\bibitem{Walid6G}
W.~Saad, M.~Bennis, and M.~Chen, ``A vision of {6G} wireless systems:
  applications, trends, technologies, and open research problems,''
  \emph{arXiv:1902.10265}, Feb. 2019.

\bibitem{shokri2015millimeter}
H.~Shokri-Ghadikolaei, C.~Fischione, G.~Fodor, P.~Popovski, and M.~Zorzi,
  ``{Millimeter wave cellular networks: a MAC layer perspective},'' \emph{IEEE
  Transactions on Communications}, vol.~63, no.~10, pp. 3437--3458, Oct 2015.

\bibitem{Omid2}
O.~Semiari, W.~Saad, and M.~Bennis, ``{Downlink cell association and load
  balancing for joint millimeter wave-microwave cellular networks},'' \emph{in
  Proc. of the IEEE Global Communications Conference}, pp. 1--6, Washington DC,
  USA, Dec 2016.

\bibitem{baldemair2015ultra}
R.~Baldemair, T.~Irnich, K.~Balachandran, E.~Dahlman, G.~Mildh, Y.~Selén,
  S.~Parkvall, M.~Meyer, and A.~Osseiran, ``{Ultra-dense networks in
  millimeter-wave frequencies},'' \emph{IEEE Communications Magazine}, vol.~53,
  no.~1, pp. 202--208, January 2015.

\bibitem{Kulkarni2017}
M.~N. Kulkarni, A.~O. Kaya, D.~Calin, and J.~G. Andrews, ``Impact of humans on
  the design and performance of millimeter wave cellular networks in
  stadiums,'' \emph{in Proc. of IEEE Wireless Communications and Networking
  Conference (WCNC)}, pp. 1--6, San Francisco, CA, May 2017.

\bibitem{Bennis}
M.~Bennis, M.~Debbah, and H.~V. Poor, ``Ultrareliable and low-latency wireless
  communication: tail, risk, and scale,'' \emph{Proceedings of the IEEE}, vol.
  106, no.~10, pp. 1834--1853, Oct 2018.

\bibitem{singh2015tractable}
S.~Singh, M.~N. Kulkarni, A.~Ghosh, and J.~G. Andrews, ``{Tractable Model for
  Rate in Self-Backhauled Millimeter Wave Cellular Networks},'' \emph{IEEE
  Journal on Selected Areas in Communications}, vol.~33, no.~10, pp.
  2196--2211, Oct 2015.

\bibitem{Seong2016GHz}
S.~K. Yoo, S.~L. Cotton, R.~W. Heath, and Y.~J. Chun, ``{Measurements of the 60
  GHz UE to eNB channel for small cell deployments},'' \emph{IEEE Wireless
  Communications Letters}, vol.~6, no.~2, pp. 178--181, April 2017.

\bibitem{7136141}
T.~Kim, I.~Bang, and D.~K. Sung, ``Design criteria on a mmwave-based small cell
  with directional antennas,'' \emph{in Proc. of IEEE Annual International
  Symposium on Personal, Indoor, and Mobile Radio Communication (PIMRC)}, pp.
  103--107, Washington, DC, USA, Sept. 2014.

\bibitem{Kovalchukov2018}
R.~Kovalchukov, D.~Moltchanov, A.~Samuylov, and A.~Ometov, ``Evaluating {SIR}
  in {3D} millimeter-wave deployments: Direct modeling and feasible
  approximations,'' \emph{IEEE Transactions on Wireless Communications},
  vol.~18, no.~2, pp. 879--896, Feb. 2019.

\bibitem{szyszkowiczautomated}
S.~S. Szyszkowicz, A.~Lou, and H.~Yanikomeroglu, ``{Automated Placement of
  Individual Millimeter-Wave Wall-Mounted Base Stations for Line-of-Sight
  Coverage of Outdoor Urban Areas},'' \emph{IEEE Wireless Communications
  Letters}, vol.~5, no.~3, pp. 316--319, June 2016.

\bibitem{shokri2015user}
H.~Shokri-Ghadikolaei, Y.~Xu, L.~Gkatzikis, and C.~Fischione, ``{User
  association and the alignment-throughput tradeoff in millimeter wave
  networks},'' \emph{in Proc. of IEEE International Forum on Research and
  Technologies for Society and Industry Leveraging a better tomorrow (RTSI)},
  pp. 100--105, Sept 2015.

\bibitem{bai2015coverage}
T.~Bai and R.~W. Heath, ``Coverage and rate analysis for millimeter-wave
  cellular networks,'' \emph{IEEE Transactions on Wireless Communications},
  vol.~14, no.~2, pp. 1100--1114, 2015.

\bibitem{ours}
M.~N. Soorki, M.~Abdel-Rahman, A.~MacKenzie, and W.~Saad, ``Joint access point
  deployment and assignment in mmwave networks with stochastic user
  orientation,'' \emph{in Proc. of WiOpt-RAWNET Workshop}, Paris, France, May
  2017.

\bibitem{8292400}
C.~Chaieb, Z.~Mlika, F.~Abdelkefi, and W.~Ajib, ``On the user association and
  resource allocation in hetnets with mmwave base stations,'' \emph{in Proc. of
  IEEE Annual International Symposium on Personal, Indoor, and Mobile Radio
  Communications (PIMRC)}, pp. 1--5, Montreal, QC, Canada, Oct. 2017.

\bibitem{Omid1}
O.~Semiari, W.~Saad, and M.~Bennis, ``Context-aware scheduling of joint
  millimeter wave and microwave resources for dual-mode base stations,''
  \emph{in Proc. of the IEEE International Conference on Communications (ICC),
  Mobile and Wireless Networks Symposium,}, Kuala lumpur, Malaysia, May 2016.

\bibitem{Channel_model_value}
------, ``Downlink cell association and load balancing for joint millimeter
  wave-microwave cellular networks,'' \emph{in Proc. of the IEEE Global
  Communications Conference (GLOBECOM)}, pp. 1--6, Washington, DC, USA, Dec.
  2016.

\bibitem{gruber2016scalability}
M.~Gruber, ``{Scalability study of ultra-dense networks with access point
  placement restrictions},'' \emph{in Proc. of IEEE International Conference on
  Communications Workshops (ICC)}, pp. 650--655, Kuala Lumpur, Malaysia, May
  2016.

\bibitem{8292566}
X.~Wang, E.~Turgut, and M.~C. Gursoy, ``Coverage in downlink heterogeneous
  mmwave cellular networks with user-centric small cell deployment,'' \emph{in
  Proc. of IEEE Annual International Symposium on Personal, Indoor, and Mobile
  Radio Communications (PIMRC)}, pp. 1--7, Montreal, QC, Canada, Oct. 2017.

\bibitem{8493070}
I.~K. Jain, R.~Kumar, and S.~Panwar, ``Driven by capacity or blockage? a
  millimeter wave blockage analysis,'' \emph{in Proc. of International
  Teletraffic Congress}, vol.~01, pp. 153--159, Vienna, Austria, Sept. 2018.

\bibitem{Gapeyenko2016}
M.~Gapeyenko, A.~Samuylov, M.~Gerasimenko, D.~Moltchanov, S.~Singh, E.~Aryafar,
  S.~Yeh, N.~Himayat, S.~Andreev, and Y.~Koucheryavy, ``Analysis of human-body
  blockage in urban millimeter-wave cellular communications,'' \emph{in Proc.
  of IEEE International Conference on Communications (ICC)}, pp. 1--7, Kuala
  Lumpur, Malaysia, May 2016.

\bibitem{Geordie2017}
G.~George, K.~Venugopal, A.~Lozano, and R.~W. Heath, ``Enclosed mmwave wearable
  networks: feasibility and performance,'' \emph{IEEE Transactions on Wireless
  Communications}, vol.~4, no.~16, pp. 2300--2313, April 2017.

\bibitem{Dub1}
F.~Firyaguna, J.~Kibilda, C.~Galiotto, and N.~Marchetti, ``Coverage and
  spectral efficiency of indoor mmwave networks with ceiling-mounted access
  points,'' \emph{in Proc. of IEEE Global Communications Conference
  (GLOBECOM)}, pp. 1--7, Marina Bay Sands, Singapore Dec. 2017.

\bibitem{Rappaport_blockage}
G.~R. MacCartney, T.~S. Rappaport, and S.~Rangan, ``Rapid fading due to human
  blockage in pedestrian crowds at {5G} millimeter-wave frequencies,'' \emph{in
  Proc. of the IEEE GLOBECOM Conference}, pp. 1--7, Marina Bay Sands, Singapore
  Dec. 2017.

\bibitem{Cisco2011}
``Cisco connected stadium wi-fi for sports and entertainment venues,''
  \emph{Cisco, Tech. Rep.}, July 2011.

\bibitem{stochasticProgramming}
P.~Kall and S.~W. Wallace, \emph{Stochastic programming}.\hskip 1em plus 0.5em
  minus 0.4em\relax John Wiley and Sons, 1994.

\bibitem{thomas2001introduction}
T.~H. Cormen, C.~E. Leiserson, R.~L. Rivest, and C.~Stein, \emph{Introduction
  to algorithms}.\hskip 1em plus 0.5em minus 0.4em\relax MIT press Cambridge,
  2001, vol.~6.

\bibitem{singh2011interference}
S.~Singh, R.~Mudumbai, and U.~Madhow, ``Interference analysis for highly
  directional 60-ghz mesh networks: The case for rethinking medium access
  control,'' \emph{IEEE/ACM Transactions on Networking}, vol.~19, no.~5, pp.
  1513--1527, Oct 2011.

\bibitem{Channel_model}
A.~Ghosh, R.~Ratasuk, P.~Moorut, T.~S. Rappaport, and S.~Sun, ``Millimeter-wave
  enhanced local area systems: a high-data-rate approach for future wireless
  networks,'' \emph{IEEE Journal on Selected Areas in Communications}, vol.~32,
  no.~6, p. 1152 –1163, June 2014.

\bibitem{Cplex}
IBM, ``{Optimization model development toolkit for mathematical and constraint
  programming (CPLEX)},''
  \emph{http://www-03.ibm.com/software/products/en/ibmilogcpleoptistud}, 2012.

\bibitem{Set_cover}
J.~Beasley, ``An algorithm for set covering problem,'' \emph{European Journal
  of Operational Research}, Volume 31, Issue 1, Pages 85-93, July 1987.

\bibitem{Assembly_Hall_ref}
``Alumni assembly hall of virginia tech,''
  \emph{\url{https://www.innatvirginiatech.com}}, accessed: 2018-02-04.

\bibitem{Mozaffari2019}
M.~Mozaffari, A.~T.~Z. Kasgari, W.~Saad, M.~Bennis, and M.~Debbah, ``Beyond
  {5G} with {UAVs}: foundations of a {3D} wireless cellular network,''
  \emph{IEEE Transactions on Wireless Communications (TWC)}, vol.~18, no.~1,
  pp. 357--372, Jan. 2019.

\bibitem{7412759}
M.~Mozaffari, W.~Saad, M.~Bennis, and M.~Debbah, ``Unmanned aerial vehicle with
  underlaid device-to-device communications: performance and tradeoffs,''
  \emph{IEEE Transactions on Wireless Communications}, vol.~15, no.~6, pp.
  3949--3963, June 2016.

\end{thebibliography}
\appendix
\section{}
\subsection{Proof of Theorem 1}
For a given set $\mathcal{C} \in \mathcal{M}$, let $u_i(\mathcal{C})=C-\sum_{m\in \mathcal{C}} z_m$ be the number of grid points of $\mathcal{C}$ remaining uncovered in the iteration $i$ of greedy algorithm, where $C$ is the size of set $\mathcal{C}$ and $\mathcal{L}^\circ_{i-1}$ is the set of APs selected by the greedy algorithm until iteration $i-1$. Note that $u_{i-1}(\mathcal{C}) \geq u_i(\mathcal{C})$. So $u_{i-1}(\mathcal{C})-u_i(\mathcal{C})$ is the number of grid points in $\mathcal{C}$ that are covered for the first time in iteration $i$ of the proposed algorithm. Let $\alpha_i(\mathcal{C})=\sum_{m\in \mathcal{C}}q_m z_m$ be the coverage gain of grid points in $\mathcal{C}$ at iteration $i$ of greedy algorithm. We assume that a subset $\mathcal{C}^\circ_i$ of $\mathcal{M}$ selected by the greedy algorithm at iteration $i$ to be assigned to the AP $i$. Let $c_m$ be the price allocated to element $m \in {\mathcal{C}}^\circ_i \cap  \mathcal{M}^*$, that is covered  for the first time at iteration $i$. It is defined as follows:
\begin{equation}
c_m=\frac{
u_{i-1}(\mathcal{C}^\circ_i)- u_i(\mathcal{C}^\circ_i)
}
{
\alpha_i(\mathcal{C}^\circ_i)-\alpha_{i-1}(\mathcal{C}^\circ_i)
}.
\label{price per m}
\end{equation}

For the optimal solution, $\sum_{l\in \mathcal{L}^*}\sum_{m\in \mathcal{C}_l^*}c_m$ is the total price needed to cover GPs in $\mathcal{M}^*$. If some sets in the optimal solution are overlapping, the price of the GPs that are common between those sets will be counted more than once. Hence, we have $\sum_{m\in \mathcal{M}^\circ \cap \mathcal{M}^*}c_m \leq \sum_{l\in \mathcal{L}^*}\sum_{m\in \mathcal{C}^*_l}c_m$. Since $\alpha_i(\mathcal{C}^\circ_i)-\alpha_{i-1}(\mathcal{C}^\circ_i)$ is less than or equal to $\big( u_{i-1}(\mathcal{C}^\circ_i)- u_i(\mathcal{C}^\circ_i) \big) \times \max_{m \in \mathcal{C}^\circ_i} {q_m}$, we can say that $ \frac{1}{\max_{m \in \mathcal{C}^\circ_i} {q_m}} \leq c_m$. Consequently, we can write:
\begin{equation}
\frac{ \sum_{i=1}^{L^\circ} \mathcal{C}_i\cap \mathcal{M}^*  }{\max_{m \in \mathcal{M}^\circ \cap \mathcal{M}^* } {q_m}} \leq \sum_{l\in \mathcal{L}^*}\sum_{m\in \mathcal{C}^*_l\cap \mathcal{M}^\circ}c_m,
\label{calculation1}
\end{equation}
where $\sum_{m\in \mathcal{C}_l^* \cap \mathcal{M}^\circ } c_m =\sum_{i=1}^{L^\circ} \frac {u_{i-1}(\mathcal{C}^*_l\cap \mathcal{M}^\circ)-u_i(\mathcal{C}^*_l\cap \mathcal{M}^\circ)}{\alpha_i(\mathcal{C}^\circ_i)-\alpha_{i-1}(\mathcal{C}^\circ_i)}$. Based on optimization problem in (\ref{Greedprob_Stoc_C}), ${\mathcal{C}}^\circ_i$ is the greedy choice at iteration $i$, so ${\mathcal{C}}^*_l$ cannot increase marginal coverage more than ${\mathcal{C}}^\circ_i$ does. Hence, $\alpha_i(\mathcal{C}^\circ_i)-\alpha_{i-1}(\mathcal{C}^\circ_i)\geq \alpha_i(\mathcal{C}^*_l)-\alpha_{i-1}(\mathcal{C}^*_l)$. Moreover, we can say that $\big(u_{i-1}(\mathcal{C}^*_l\cap \mathcal{M}^\circ)-u_{i}(\mathcal{C}^*_l\cap \mathcal{M}^\circ)\big) \times {\min_{m \in \mathcal{C}^*_l} {q_m}} \leq  \alpha_i(\mathcal{C}^*_l)-\alpha_{i-1}(\mathcal{C}^*_l)$. Thus, we can write:
\begin{align}
&\sum_{m\in \mathcal{C}_l^* \cap \mathcal{M}^\circ }c_m
\leq
\sum_{i=1}^{L^\circ} \frac {u_{i-1}(\mathcal{C}^*_l\cap \mathcal{M}^\circ)-u_i(\mathcal{C}^*_l\cap \mathcal{M}^\circ)}{\big(u_{i-1}(\mathcal{C}^*_l\cap \mathcal{M}^\circ)-u_{i}(\mathcal{C}^*_l\cap \mathcal{M}^\circ)\big) \times {\min_{m \in \mathcal{C}^*_l \cap \mathcal{M}^\circ} {q_m}}}
\label{calculation2}
\end{align}

Considering (\ref{calculation1}) and (\ref{calculation2}), we can write:
\begin{equation}
\frac{ \sum_{i=1}^{L^\circ} \mathcal{C}_i\cap \mathcal{M}^*  }{\max_{m \in \mathcal{M}^\circ \cap \mathcal{M}^* } {q_m}}
\leq
\sum_{l\in \mathcal{L}^*}
\sum_{i=1}^{L^\circ} \frac {u_{i-1}(\mathcal{C}^*_l\cap \mathcal{M}^\circ)-u_i(\mathcal{C}^*_l\cap \mathcal{M}^\circ)}{\big(u_{i-1}(\mathcal{C}^*_l\cap \mathcal{M}^\circ)-u_{i}(\mathcal{C}^*_l\cap \mathcal{M}^\circ)\big) \times {\min_{m \in \mathcal{C}^*_l \cap \mathcal{M}^\circ} {q_m}}}
\label{calculation3}
\end{equation}

If the greedy algorithm covers the grid points that the optimal solution covers, $\mathcal{M}^* \subset \mathcal{M}^\circ$. In this case, $\mathcal{C}^*_l \cap \mathcal{M}^{\circ}=\mathcal{C}^*_l$. Thus, we can write:
\begin{equation}
\frac{ L_*^\circ \times \min_{i \in \mathcal{L^\circ}} C_i^\circ }{\max_{m \in \mathcal{M}^* } {q_m}}\leq
L^* \times \frac{\max_{i \in \mathcal{L^*}} C_i^* }{\min_{m \in \mathcal{M}^*} {q_m}},
\end{equation}
where $L_*^\circ$ is the number of APs selected by greedy algorithm to cover a set of $\mathcal{M}^*$ covered by optimal solution. Considering that $\min_{i \in \mathcal{L}} C_i \leq \min_{i \in \mathcal{L^\circ}} C_i^\circ $, we can say $ L_*^\circ \leq \frac{\max_{i} C_i^* \times \max_{m \in \mathcal{M}^*} {q_m} }{\min_{i} C_i \times \min_{m \in \mathcal{M}^*} {q_m}}L^*$.
\end{document}